\documentclass[manuscript,screen]{acmart}
\settopmatter{printacmref=false, printfolios=true}

\usepackage{algorithmic}
\usepackage{graphicx}
\usepackage{textcomp}
\usepackage{bmpsize}
\usepackage{xcolor}
\usepackage{lipsum}
\usepackage[ruled,linesnumbered,algosection]{algorithm2e}

\usepackage{color,soul,url}
\usepackage{amsmath,amsfonts}
\usepackage{array,multirow}
\usepackage{amsthm}
\newtheorem{theorem}{Theorem}[section]

  \newtheorem{lemma}[theorem]{Lemma}
  
  \newtheorem{mydef}{Definition}
  
\AtBeginDocument{%
  \providecommand\BibTeX{{%
    \normalfont B\kern-0.5em{\scshape i\kern-0.25em b}\kern-0.8em\TeX}}}

\setcopyright{none}
\renewcommand\footnotetextcopyrightpermission[1]{}

\acmBooktitle{}



\begin{document}

\title{Genomic Data Sharing under Dependent Local Differential Privacy}

\author{Emre Yilmaz}
\email{yilmaze@uhd.edu}
\affiliation{%
  \institution{University of Houston-Downtown}
  \city{Houston}
  \state{Texas}
  \country{USA}
}

\author{Tianxi Ji}
\email{txj116@case.edu}
\affiliation{%
  \institution{Case Western Reserve University}
  \city{Cleveland}
  \state{Ohio}
  \country{USA}}

\author{Erman Ayday}
\email{exa208@case.edu}
\affiliation{%
  \institution{Case Western Reserve University}
  \city{Cleveland}
  \state{Ohio}
  \country{USA}}

\author{Pan Li}
\email{pxl288@case.edu}
\affiliation{%
  \institution{Case Western Reserve University}
  \city{Cleveland}
  \state{Ohio}
  \country{USA}}

\renewcommand{\shortauthors}{Yilmaz, et al.}

\begin{abstract}
 Privacy-preserving genomic data sharing is prominent to increase the pace of genomic research, and hence to pave the way towards personalized genomic medicine. In this paper, we introduce ($\epsilon , T$)-dependent local differential privacy (LDP) for privacy-preserving sharing of correlated data and propose a genomic data sharing mechanism under this privacy definition. We first show that the original definition of LDP is not suitable for genomic data sharing, and then we propose a new mechanism to share genomic data. The proposed mechanism considers the correlations in data during data sharing, eliminates statistically unlikely data values beforehand, and adjusts the probability distributions for each shared data point accordingly. By doing so, we show that we can avoid an attacker from inferring the correct values of the shared data points by utilizing the correlations in the data. By adjusting the probability distributions of the shared states of each data point, we also improve the utility of shared data for the data collector. Furthermore, we develop a greedy algorithm that strategically identifies the processing order of the shared data points with the aim of maximizing the utility of the shared data. Considering the interdependent privacy risks while sharing genomic data, we also analyze the information gain of an attacker about genomes of a donor's family members by observing perturbed data of the genome donor and we propose a mechanism to select the privacy budget (i.e., $\epsilon$ parameter of LDP) of the donor by also considering privacy preferences of her family members. Our evaluation results on a real-life genomic dataset show the superiority of the proposed mechanism compared to the randomized response mechanism (a widely used technique to achieve LDP).
\end{abstract}



\keywords{Local differential privacy, Genomics, Data sharing}

\maketitle

\section{Introduction}
\label{sec:intro}

Recent advances in genome sequencing technologies have enabled individuals to access their genome sequences easily, resulting in massive amounts of genomic data. 
On one hand, sharing this massive amount of data is important for the progress of genomics research. Genomic data collected by research laboratories or published in public repositories leads to significant breakthroughs in medicine, including discovery of associations between mutations and diseases.
On the other hand, genomic data contains sensitive information about individuals, such as predisposition to diseases and family relationships. Due to privacy concerns, individuals are generally hesitant to share their genomic data. Therefore, how to facilitate genomic data sharing in a privacy-preserving way is a crucial problem.

One way to preserve privacy in genomic data sharing and analysis is to utilize cryptographic techniques. However, these techniques mostly bring interoperability and scalability problems. Encrypted data can only be used for a limited number of operations and high computation costs decrease the applicability of these techniques for large scale datasets.
Local differential privacy (LDP) is a state-of-the-art definition to preserve the privacy of the individuals in data sharing with an untrusted data collector, and hence it is a promising technology for privacy-preserving sharing of genomic data. Perturbing data before sharing provides plausible deniability for the individuals. However, the original LDP definition does not consider the data correlations. Hence, applying existing LDP-based data sharing mechanisms directly on genomic data makes perturbed data vulnerable to attacks utilizing correlations in the data.

In this work, our goal is to provide privacy guarantees for the shared genomic sequence of a data owner against inference attacks (that utilize correlations in the data) while providing high data utility for the data collector. For that, we develop a new genomic data sharing mechanism by defining a variant of LDP under correlations, named $(\epsilon,T)$-dependent LDP. We use randomized response (RR) mechanism (a commonly used LDP-based data sharing mechanism) as a baseline since the total number of states for each genomic data point is 3 and the RR provides the best utility for such a small number of states~\cite{wang2017locally}. Moreover, RR uses the same set of inputs and outputs without an encoding, which allows the data collector to use perturbed data directly. We show how directly applying RR mechanism is vulnerable to inference attacks and focus on improving its privacy and utility while providing formal privacy guarantees. Thus, we first show how correlations in genomic data can be used by an attacker to infer the original values of perturbed data points. We describe a correlation attack and show how estimation error of the attacker (a commonly used metric to quantify genomic privacy) decreases due to the direct use of RR.  

In the correlation attack, to increase its inference power, the attacker detects (and eliminates) the data values that are not consistent with the other shared values based on correlations. Thus, in the proposed data sharing scheme, we consider such an attack by-design and do not share the values of the shared data points which are inconsistent with the previously shared data points. During sharing of each data point (single nucleotide polymorphism - SNP) with the data collector, the proposed algorithm eliminates a particular value of a shared SNP if the corresponding value of the SNP occurs with negligible probability considering its correlations with the other shared SNPs (to prevent an attacker utilize such statistically unlikely values to infer the actual values of the SNPs). Then, the algorithm adjusts the sharing probabilities for the non-eliminated values of the SNP by normalizing them and making sure that the attacker's distinguishability between each possible values of the SNP is bounded by $e^{\epsilon}$, which achieves $(\epsilon,T)$-dependent LDP.

To improve utility, we introduce new probability distributions (for the shared states of each SNP), such that, for each shared SNP, the probability of deviating significantly from its ``useful values'' is small. Useful values of a SNP depend on how the data collector intends to use the collected SNPs. For this, we focus on genomic data sharing beacons (a system constructed with the aim of providing a secure and systematic way of sharing genomic data) and show how to determine probability distributions for different states of each shared SNP with the aim of maximizing the utility of the collected data (this can easily be extended for other uses of genomic data, such as in statistical databases). In the proposed mechanism, SNPs of a genome donor are processed sequentially. Although the proposed $(\epsilon,T)$-dependent LDP definition is satisfied in any order, the number of eliminated states for each SNP can be different based on the order of processing. Hence, the utility of the shared data changes when the processing order of SNPs changes. We also show how to determine an optimal order of processing (which provides the highest utility) via Markov decision process (MDP) and provide a value iteration-based algorithm to achieve this goal. Furthermore, due to complexity of the optimal algorithm, we propose an efficient greedy algorithm to determine the processing order of the SNPs in the proposed data sharing mechanism. 

Since SNPs of a child are inherited from her parents, genomic data sharing should also consider the privacy of the family members. An attacker can gain information about the SNPs of a donor's family member even though the donor shares her SNPs after perturbing their values using the proposed mechanism. Thus, we also explain how the information gain of an attacker about a family member can be computed in terms of the privacy budget (i.e., $\epsilon$ parameter) of the genome donor. We also propose a basic algorithm to identify the maximum privacy budget (i.e., largest $\epsilon$ value) which can be used by a genome donor to make sure that the privacy budgets of her family members are not violated due to her sharing.

We conduct experiments with a real-life genomic dataset to show the utility and privacy provided by the proposed scheme. Our experimental results show that the proposed scheme provides better privacy and utility than the original randomized response mechanism. We also show that using the proposed greedy algorithm for the order of processing, we improve the utility compared to randomly selecting the order of processed SNPs.

The rest of the paper is organized as follows. We review the related work in Section~\ref{sec:related} and provide the technical preliminaries in Section~\ref{sec:preliminary}. We present the proposed framework in Section~\ref{sec:proposed}. We propose an algorithm for optimal data processing order in Section~\ref{sec:order}. We evaluate the proposed scheme via experiments in Section~\ref{sec:evaluation}. In Section~\ref{sec:disc}, we discuss about considering kinship relationships during data sharing and our assumptions about the attacker. Finally, we conclude the paper in Section~\ref{sec:conclusion}.

\section{Related Work}
\label{sec:related}

In this section, we discuss relevant existing works on genomic privacy along with local differential privacy.

\subsection{Genomic Privacy}

Genomic privacy topic has been recently explored by many researchers~\cite{survey:genomicera}. Several works have studied various inference attacks against genomic data including membership inference~\cite{related:homer,related:wang,related:shringarpureandbastumante} and attribute inference~\cite{genomic:lacks,genomic:highorder,Khodam}. To mitigate these threats, some researchers proposed using cryptographic techniques for privacy-preserving processing of genomic data \cite{related:baldi, related:ermanclinic,related:wangPrivateEditDistance, deuber2019my}. The differential privacy (DP) concept~\cite{privacy:differentialprivacy} has also been used to release summary statistics about genomic data in a privacy-preserving way (to mitigate membership inference attacks) \cite{differential:gwas, differential:gwas_yu, differential:gwas_johnson}. Unlike the existing DP-based approaches, our goal is to share the genomic sequence of an individual, not summary statistics. To share genomic sequences in a privacy-preserving way, Humbert~\emph{et al.} proposed an optimization- based technique that selectively hides portions of shared genomic data to optimize utility by considering privacy constraints~\cite{genomic:reconciling}. However, they do not provide formal privacy guarantees. For the first time, we study the applicability of LDP for genomic data sharing and develop a variant of LDP for correlated data.

\subsection{Local Differential Privacy}
Differential privacy (DP)~\cite{privacy:differentialprivacy} is a concept to preserve the privacy of records in statistical databases while publishing statistical information about the database. Although DP provides strong guarantees for individual privacy, there may be privacy risks for individuals when data is correlated. Several approaches have been proposed \cite{yang2015bayesian, cao2017quantifying, liu2016dependence, song2017pufferfish} in order to protect the privacy of individuals under correlations. Since these works focus on privacy of aggregate data release (e.g., summary statistics about data), they are not suitable for individual data sharing. Local differential privacy (LDP) is a state-of-the-art definition to preserve the privacy of the individuals in data sharing with an untrusted data collector. However, a very limited number of tasks, such as frequency estimation \cite{wang2017locally}, heavy hitters \cite{bassily2017practical}, frequent itemset mining \cite{wang2018locally}, marginal release \cite{cormode2018marginal}, and range queries \cite{cormode2019answering} have been demonstrated under LDP and the accuracy of these tasks are much lower than performing the same task under the central model of differential privacy. Collecting perturbed data from more individuals decreases the accuracy loss due to randomization. Hence, practical usage of LDP-based techniques needs a high number of individuals (data owners), which limits the practicality of LDP-based techniques. To overcome the accuracy loss due to LDP, a shuffling technique has recently been proposed \cite{erlingsson2019amplification, cheu2019distributed}. The main idea of shuffling is to utilize a trusted shuffler which receives the perturbed data from individuals and permutes them before sending to data collector. However, requiring a trusted shuffler also restricts the practical usage of this method. 

Another approach to improve utility of LDP is providing different privacy protection for different inputs. In the original definition of LDP, all inputs are considered as sensitive and the indistinguishability needs to be provided for all inputs. Murakami \emph{et al.} divided the inputs into two groups as sensitive and non-sensitive ones \cite{murakami2019utility}. They introduced the notion of utility-optimized LDP, which provides privacy guarantees for only sensitive inputs. Gu \emph{et al.} \cite{gu2019providing} proposed input-discriminative LDP, which provides distinct protection for each input. However, grouping inputs based on their sensitivity is not realistic in practice due to the subjectivity of sensitivity. In this work, we discriminate the inputs based on their likelihood instead of their sensitivity. We focus on how correlations can be used by an attacker to degrade privacy and how we mitigate such degradation. Hence, we provide indistinguishability between possible states by eliminating the states that are rarely seen in the population using correlations. By doing so, we aim to decrease the information gain of an attacker that uses correlations for inference attacks. Furthermore, both of these works~\cite{murakami2019utility, gu2019providing} aim to improve utility by providing less indistinguishability for non-sensitive data and providing more accurate estimations. In our work, the accuracy does not rely on estimations. Instead, we provide high accuracy by eliminating rare values from both input and output sets. Moreover, we improve the utility by increasing the probability of ``useful values'' considering the intended use of the shared data.

\section{Technical Preliminaries}
\label{sec:preliminary}

In this section, we give brief backgrounds about genomics and LDP.

\subsection{Genomics Background}
\label{subsec:genomics}

The human genome contains approximately 3 billion pairs of nucleotides (A, T, C, or G). Approximately $99.9\%$ of these pairs are identical in all people. When more than 0.5\% of the population does not carry the same nucleotide at a specific position in the genome, this variation is considered as single-nucleotide polymorphism (SNP). More than 100 million SNPs have been identified in humans. For a SNP, the nucleotide which is observed in the majority of the population is called the major allele and the nucleotide which is observed in the minority of the population is called the minor allele. Each person has two alleles for each SNP position, and each of these alleles are inherited from a parent of the individual. Hence, each SNP can be represented by the number of its minor alleles, which can be 0, 1, or 2. In this work, we study the problem of sharing the values of SNPs in a privacy-preserving way. It is shown that SNPs may have pairwise correlations between each other (e.g., linkage disequilibrium~\cite{slatkin2008linkage}). Hence, an attacker can use such correlations to infer the original values of the shared SNPs. Furthermore, since alleles are inherited from parents, sharing SNPs by a genome donor also results in revealing some information about the family members. Thus, privacy of the family members should also be considered while sharing genomic data.

\subsection{Local Differential Privacy}
\label{subsec:ldp}

Local differential privacy (LDP) is a variant of differential privacy that allows to share data with an untrusted party. In LDP settings, there is a data collector who wants to compute statistical information about a population. Each individual shares her data with the data collector after perturbation (randomization). Then, the data collector uses all collected perturbed data to estimate statistics about the population. During data perturbation, the privacy of the individuals are protected by achieving indistinguishability. 

In this work, we adopt the  general definition of
local differential privacy \cite{duchi2013local,kairouz2014extremal}, which is expressed as follows:
\begin{mydef}
A randomized mechanism $A$ satisfies $\epsilon$-local differential privacy if 
\begin{equation*}
    \sup_{y\in\sigma(\mathcal{Y}),d_{\alpha},d_{\beta}\in\mathcal{X}}\frac{\Pr(y|A(x = d_{\alpha}))}{\Pr(y|A(x = d_{\beta}))} \leq e^{\epsilon},
\end{equation*}
where $d_{\alpha}$ and $d_{\beta}$ are two possible values of an element $x$, 
$y$ is the output value, $\mathcal{Y}$ is the collection of all possible output values of $x$, and $\sigma(\mathcal{Y})$ denotes an appropriate $\sigma$-filed on $\mathcal{Y}$.
\label{def:general-ldp}
\end{mydef}
Definition \ref{def:general-ldp} captures a type of plausible-deniability, i.e., 
no matter what input value  of $x$ is released, it is nearly equally as
likely to have come from any of its possible values.

The parameter $\epsilon$ is the privacy budget, which controls the level of privacy. Randomized response (RR) \cite{warner1965randomized} is a mechanism for collecting sensitive information from individuals by providing plausible deniability. Although RR is originally defined for two possible inputs (e.g., yes/no), this mechanism can also be generalized to protect privacy when there are more than two possible states. In generalized randomized response \cite{wang2018locally}, the correct value is shared with probability $p = e^\epsilon / (e^\epsilon + $m$ - 1)$ and each incorrect value is shared with probability $q = 1 / (e^\epsilon + $m$ - 1)$ to achieve $\epsilon$-LDP, where $m$ is the number of states.

\section{Proposed Framework}
\label{sec:proposed}

In this section, we first introduce the problem and explain genomic data sharing with an untrusted data collector by directly applying RR mechanism. We then present a correlation attack that utilizes correlations between SNPs and show the significant decrease in the estimation error of the attacker (a commonly used privacy metric for genomic data) after the attack. Then, we show how to simultaneously improve privacy against the correlation attacks and improve utility for genomic analysis. Finally, we present our proposed genomic data sharing mechanism.

\subsection{Problem Statement}
\label{subsec:randomized_response}

\textbf{System Model.} Figure~\ref{fig:system} shows the overview of the system model and the steps of the proposed framework. We focus on a problem, where genome donors share their genomic data in a privacy-preserving way with a data collector who will use collected data to answer queries about the population. In genomic data sharing scenario, there are $n$ individuals ($I_1,...,I_n$) as genome donors. A genome donor $I_j$ has a sequence of SNPs denoted by $\mathcal{X}^j = \lbrack x_{1}^j, \ldots, x_{l}^j \rbrack$. Since each SNP is represented by the number of minor alleles it carries, each $x_{i}^j$ has a value from the set $\{0, 1, 2\}$. \color{black} Today, individuals can obtain their genomic sequences via various online service providers, such as 23andMe, and they also share their sequences with other service providers or online repositories (e.g., for research purposes). Hence, the proposed system model has real-world applications, where individuals want to preserve privacy of their genomic data when they share their genomic sequences with other service providers and online repositories. 

\begin{figure}
\centering
\includegraphics[width=\columnwidth,keepaspectratio]{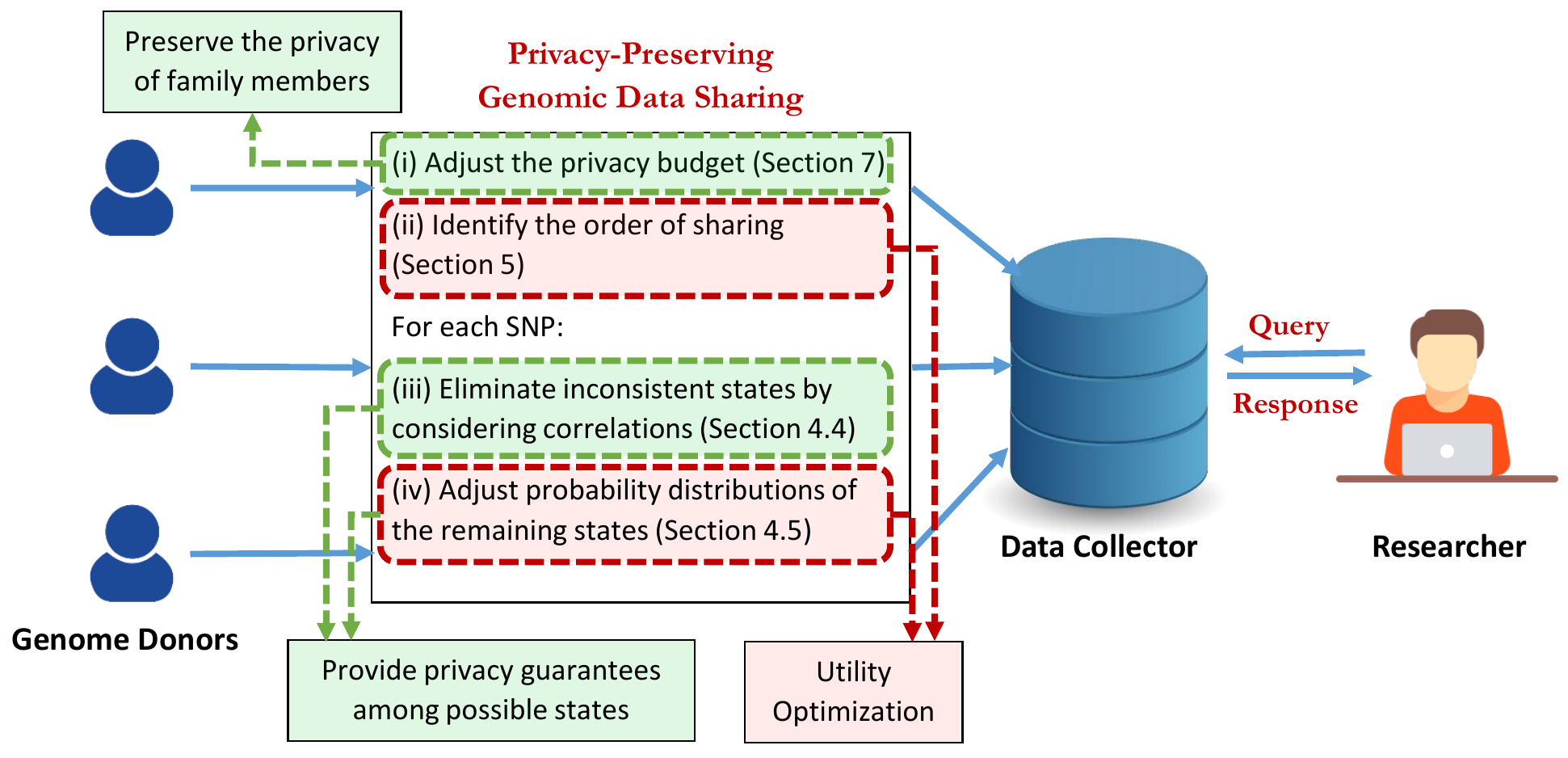}
\caption{System Model.}
\label{fig:system}
\end{figure}

\noindent \textbf{Threat Model.} The data collector is considered as untrusted (i.e., potential attacker). It can share the data directly with another party or use it to answer queries. Hence, we assume the attacker has data shared by all genome donors with the data collector, however, it does not know the original values of any SNPs. In addition, we assume that the attacker knows the pairwise correlations between SNPs (which can be computed using public datasets), the perturbation method, and the privacy budget $\epsilon$. Thus, the attacker can infer whether the shared value of a SNP is equal to its original value using correlations.\vspace{1pt}

\noindent \textbf{Data Utility.} The data collector uses data collected from genome donors to answer queries. Therefore, we define the utility as the accuracy of data collector to answer such queries. For genomic data, typically, the utility of each value of a SNP is different and the utility of a SNP may change depending on the purpose of data collection (e.g., statistical genomic databases, genomic data sharing beacons, or haploinsufficiency studies). Thus, one of our aims is to improve the utility of LDP-based data collection mechanism by considering data utility as a part of the data sharing mechanism.

\vspace{1pt}
\noindent \textbf{Genomic Data Sharing Under Local Differential Privacy.} In \cite{wang2017locally}, several approaches have been explained for estimating frequency of inputs under LDP such as direct encoding, histogram encoding, and unary encoding. As shown in \cite{wang2017locally}, when the size of input set is less than $3e^{\epsilon} + 2$, direct encoding is the best among these approaches. Since the size of input set for genomic data is $3$, we also use direct encoding approach for genomic data sharing. In direct encoding approach, no specific encoding technique is applied to inputs before perturbation and randomized response (RR) mechanism (introduced in Section \ref{subsec:ldp}) is used for perturbing inputs. To apply RR mechanism and achieve $\epsilon$-LDP for genomic data, the value of a SNP is shared correctly with probability $p = e^\epsilon / (e^\epsilon + 2)$ and each incorrect value is shared with probability $q = 1 / (e^\epsilon + 2)$. After receiving perturbed values from $n$ individuals, the data collector estimates the frequency of each input in the population as $\frac{c_i - n \cdot q}{p - q}$, where $c_i$ is the number of individuals who shared $i \in \{0,1,2\}$.

\subsection{Correlation Attack Against LDP-Based Genomic Data Sharing}
\label{subsec:attack}
When multiple data points are shared with the RR mechanism, $\epsilon$-LDP is still guaranteed if the data points are independent. However, it is known that SNPs have pairwise correlations between each other (e.g., linkage disequilibrium~\cite{slatkin2008linkage}). An attacker can use the correlations between SNPs to infer incorrectly or correctly shared SNPs as a result of the RR mechanism.

To show this privacy risk, we consider a correlation attack that can be performed by an attacker in the following. We represent a SNP $i$ as $SNP_i$ and we represent the value of $SNP_i$ for individual $I_j$ as $x_i^j$. 
We assume that all pairwise correlations between SNPs are publicly known. Hence, $\Pr(SNP_i = d_{\alpha}~|~SNP_k = d_{\beta})$ is known by the attacker for any $i, k \in \{1,\ldots,l\}$ and $d_{\alpha}, d_{\beta} \in \{0,1,2\}$. Let $\mathcal{Y}^j = \lbrack y_{1}^j, \ldots,y_{l}^j \rbrack$ be the perturbed data that is shared by $I_j$ with the data collector (potential attacker whose goal is to infer the actual SNP values of the individual). 
Without using the correlations, the attacker's only knowledge about any $x_i^j$ is the probability distribution of randomized response mechanism. However, using the correlations, the attacker can enhance its knowledge about the probability distribution of $x_i^j$ by eliminating the values that are not likely to be observed (i.e., that have low correlation with the other received data points).

To achieve this, for each $SNP_i$ of $I_j$, using all other received data points $\lbrack y_{1}^j, \ldots,y_{l}^j \rbrack$ (except for $y_{i}^j$), the attacker counts the number of inconsistent instances in terms of correlations between different values of $SNP_i$ and all other received data points (i.e., having correlation less than a threshold). 
Let $\tau$ be the correlation threshold of the attacker. The attacker keeps a count for the number of instances for each $SNP_k$ ($k \in \{1,\ldots,l\}, k \neq i$) having $\Pr(x_i^j = 0 ~|~x_k^j = y_{k}^j) < \tau$, $\Pr(x_i^j = 1~|~x_k^j = y_{k}^j) < \tau$, and $\Pr(x_i^j = 2~|~x_k^j = y_{k}^j) < \tau$ as $c_{i,0}^j$, $c_{i,1}^j$, and $c_{i,2}^j$, respectively. If any of these values is greater than or equal to $\gamma \cdot l$ (where $\gamma$ is an attack parameter for the number of inconsistent data points), the attacker eliminates that value in the probability distribution of $x_i^j$ and considers the remaining values for its inference about the correct value of $x_i^j$.

To show the effect of this correlation attack on privacy, we implemented the RR mechanism for genomic data and computed the attacker's estimation error, a metric used in genomic privacy to quantify the distance of the attacker's inferred values from the original data, before and after the attack. Our results (in Figure~\ref{fig:ldpComp}, Section~\ref{subsec:compareLDP}) clearly show \color{black} the vulnerability of directly using RR in genomic data sharing. For instance, when $\epsilon = 1$, the attacker's estimation error decreases from 0.8 to 0.4 after the correlation attack. In general, we observed that the attacker's estimation error (i.e., the privacy of genome donor) decreases approximately 50\% by using this attack strategy. 

\subsection{$(\epsilon,T)$-dependent Local  Differential Privacy}
\label{sec:newDefinition}

To handle data dependency in privacy-preserving data sharing, 
some works, such as \cite{chanyaswad2018mvg,liu2016dependence} extend the definition of traditional differential privacy by considering the correlation between elements in the dataset. However, there is a lack of such variants for local differential privacy models, which hinders the application of LDP-based solutions for privacy-preserving genomic  data sharing. In this paper, inspired by \cite{liu2016dependence}, which handles data dependency by considering the number of elements that can potentially be affected by a single element, we propose the following definition.
\begin{mydef}
An element $x$ in  a dataset $\mathcal{X}$  is said to be $T$-dependent under a correlation model  (denoted as $\mathrm{Corr}$) if its released  value $y$ depends on at most other $T$ elements in $\mathcal{X}$. 
\label{def:T-aff}
\end{mydef}

Furthermore, let $\mathcal{Q}_i$ be the set of elements on which a $T$-dependent element $x_i\in\mathcal{X}$ is dependent (through model $\mathrm{Corr}_i$) ($|\mathcal{Q}_i|\leq T$), $\mathcal{Q}_i'$ be the set of released values of $\mathcal{Q}_i$, and $d_{\alpha|\mathcal{Q}_i',\mathrm{Corr}_i}$ represent the possible value(s) of $x_i$ that can be released due to the releasing of $\mathcal{Q}_i'$ and model $\mathrm{Corr}_i$. Note that it is possible for some elements to have only one  possible value to be shared under a specific correlation model. If the only possible value happens to be the true value of that element, we call  these elements  \emph{ineliminable elements}, whose privacy will be inevitably compromised for the sake of the  utility improvement of the entire shared elements (we formally investigate this issue in Section~\ref{sec:order}). As a result, we 
 propose the following definition. 

\begin{mydef}
A randomized mechanism $A$ is said to be $(\epsilon,T)$-dependent local  deferentially private for an element that is not ineliminable 
if 
\begin{equation*}
        \sup_{y\in\sigma(\mathcal{Y}),d_{\alpha|\mathcal{Q}_i',\mathrm{Corr}_i}, d_{\beta|\mathcal{Q}_i',\mathrm{Corr}_i}}\frac{\Pr(y|A(x_i=d_{\alpha|\mathcal{Q}_i',\mathrm{Corr}_i}))}{\Pr(y|A(x_i=d_{\beta|\mathcal{Q}_i',\mathrm{Corr}_i}))} \leq e^{\epsilon}.
\end{equation*}
\label{def:dldp}
\end{mydef}

Definition \ref{def:dldp} can be considered as   a specialization of the general LDP in Definition \ref{def:general-ldp} by having $d_{\alpha} = d_{\alpha|\mathcal{Q}_i',\mathrm{Corr}_i}$ and $d_{\beta} = d_{\beta|\mathcal{Q}_i',\mathrm{Corr}_i}$. Essentially, Definition \ref{def:dldp} means that any output  of a $T$-dependent element $x$ is nearly equally as likely to have come from any of its possible input values given other already shared  elements (i.e., $\mathcal{Q}_i'$ and a correlation model $\mathrm{Corr}_i$). We   summarize the main notations used throughout this paper in Table \ref{table_notations}.

\begin{table}[htb]
\small
\begin{center}
 \begin{tabular}{c | c  } 
 \hline
 \hline
 Notions & Descriptions \\
  \hline
  $x_i^j$& $SNP_i$ for individual $I_j$\\
  \hline
$d_{\alpha|\mathcal{Q}_i',\mathrm{Corr}_i}$& possible values of $x_i^j$ after SNPs in $\mathcal{Q}_i$ are \\
$\in\{0,1,2\}$& shared as $\mathcal{Q}_i'$ given the correlation model $\mathrm{Corr}_i$\\
\hline
$y_i$ & released value of SNP $x_i^j$,$y_i\in\sigma(\mathcal{Y}) = \{0,1,2\}$  \\
\hline
$\hat{\tau}$ & correlation threshold for value elimination\\
\hline
$\hat{\gamma}$ & inconsistency threshold \\
\hline
 &  a tuple describing   individual $I_j$'s MDP interaction\\ 
$\{\mathcal{S}^j,s_1^j,$ & with the environment, i.e., \{  set of all MDP states  \\
$\mathcal{A}^j,\Pr^j(\cdot),$ & of  $I_j$, initial MDP state of $I_j$, action set, \\
$\mathcal{R}^j,H^j\}$& transition probability between two MDP states, \\
&set of rewards, the horizon of the MDP \} \\
\hline
$\pi_{i}^j$ & individual $I_j$'s decision policy at time step $i$\\
\hline
${\pi_{i}^j}^*$ & individual $I_j$'s optimal decision policy at time step $i$\\
\hline
$v^{\pi_i^j}(s_i)$ &  state-value function of   state $s_i^j$ under policy $\pi_i^j$\\
\hline
$\boldsymbol{\pi}^j$ & a SNP sharing order of individual $I_j$\\
\hline
${\boldsymbol{\pi}^j}^*$ & the optimal SNP sharing order of individual $I_j$\\
     \hline
 \hline
\end{tabular}
\caption{Frequently used notations used in the paper.} 
\label{table_notations}
\end{center}
\end{table}

\subsection{Achieving $(\epsilon,T)$-dependent LDP in Genomic Data Sharing}
\label{subsec:improvePrivacy}
Our experimental results show the vulnerability of directly applying RR mechanism for genomic data sharing. Thus, here, our goal is to come up with a genomic data sharing approach achieving $(\epsilon,T)$-dependent LDP that is robust against the correlation attack. The definition of LDP states that given any output, the distinguishability between any two possible inputs needs to be bounded by $e^{\epsilon}$. In Section~\ref{subsec:attack}, all values in set $\{0,1,2\}$ are considered as possible inputs for all SNPs during data sharing. However, we know that the attacker can eliminate some input states using correlations. Hence, for the rest of the paper, we consider the possible input states as the ones  that are not eliminated by using correlations. In other words, instead of providing indistinguishability between any two values in the set $\{0,1,2\}$, we provide indistinguishability between the values that are statistically possible. 

In the correlation attack described in Section~\ref{subsec:attack}, the attacker uses two threshold values. The correlation values less than $\tau$ are considered as low correlation. In addition, if the fraction of SNPs having low correlation with a state of a particular SNP is more than $\gamma$, such state of the SNP is eliminated by the attacker. 
In the data sharing scheme, we also use these two parameters to eliminate states. However, the parameters used by the algorithm may not be same with the ones used by the attacker. Hence, to distinguish the parameters used by the algorithm and the attacker, we represent the parameters used in the algorithm as $\hat{\tau}$ and $\hat{\gamma}$ (which are the design parameters of the proposed data sharing algorithm). We describe this algorithm for a donor $I_j$ as follows.

In each step of the proposed algorithm, one SNP $x_i^j$ is processed. The algorithm first determines the states to be eliminated by considering previously shared SNPs. Then, the algorithm selects the value to be shared ($y_i^j$) by limiting the distinguishability of non-eliminated states by $e^{\epsilon}$. Hence, the order of processing may change the number of eliminated states for a SNP, which may also change the utility of the shared data. 
For instance, when a SNP is processed as the first SNP, all its three states are possible (for sharing) since there is no previously shared SNP. However, processing the same SNP as the last SNP may end up eliminating one or more of its states (due to their correlations with previously shared SNPs). We propose an algorithm to select the optimal sharing order (considering utility of shared data) in Section~\ref{sec:order}. In the following, we assume that a sharing order is provided by the algorithm in Section~\ref{sec:order} and SNPs are processed one by one following this order.

For $x_i^j$, the algorithm considers the previously shared data points and identifies the states which will be eliminated. As explained in the correlation attack, the algorithm counts the number of previously shared SNPs which have low correlation with states 0, 1, and 2 of $x_i^j$. Thus, the algorithm keeps counts for the previously shared SNPs ($SNP_k$) having $\Pr(x_i^j = 0~|~ x_k^j = y_{k}^j) < \hat{\tau}$, $\Pr(x_i^j = 1 ~|~ x_k^j = y_{k}^j) < \hat{\tau}$, $\Pr(x_i^j = 2 ~|~ x_k^j = y_{k}^j) < \hat{\tau}$ as $\hat{c_{i,0}^j}$, $\hat{c_{i,1}^j}$, and $\hat{c_{i,2}^j}$, respectively. If any of these values is greater than or equal to $\hat{\gamma} \cdot i$, the algorithm eliminates such value from the possible outputs of $x_i^j$. Let $p = e^\epsilon / (e^\epsilon + 2)$ and $q = 1 / (e^\epsilon + 2)$, and the value of $x_{i}^j$ be $0$. Then, the algorithm assigns the probabilities of non-eliminated states as follows:
\begin{itemize}
    \item If there are three possible outputs (i.e., no eliminated state), the algorithm uses the same probability distribution with the RR mechanism as $(p, q, q)$. Thus, $\Pr(y_{i}^j=0) = p$ and $\Pr(y_{i}^j = 1) = \Pr(y_{i}^j = 2) = q$.
    \item If there are two possible outputs (i.e., one eliminated state) and $x_i^j$ (state $0$) is not eliminated, the algorithm uses an adjusted probability distribution as $(p / (p + q), q / (p + q), 0)$ (or $(p / (p + q), 0, q / (p + q))$, depending on which state is eliminated).
     \item If there are two possible outputs (i.e., one eliminated state) and $x_i^j$ is eliminated, the algorithm uses an adjusted probability distribution as $(0, 0.5, 0.5)$.
    \item If there is one possible output (i.e., two eliminated states), the corresponding state is selected as the output.
    \item If there is no possible output (i.e., three eliminated states), the algorithm uses the same probability distribution as the RR mechanism.
\end{itemize}
For other values of $x_{i}^j$, the algorithm also works in a similar way. 
The probability distributions for sharing a data point are also shown in Figure~\ref{fig:prob1}. Based on these probabilities, the algorithm selects the value of $y_{i}^j$. 
If the attacker knows $\hat{\tau}$ and $\hat{\gamma}$ used in the algorithm, it can compute the possible values for each SNP using perturbed data $\mathcal{Y}^j = \lbrack y_{1}^j, \ldots,y_{l}^j \rbrack$, $\hat{\tau}$, $\hat{\gamma}$ and the correlations between the SNPs. Since $e^{\epsilon}$ ratio is preserved in each case, the attacker can only distinguish the possible inputs with $e^{\epsilon}$ difference.
\begin{figure}
\centering
\includegraphics[width=10cm,keepaspectratio]{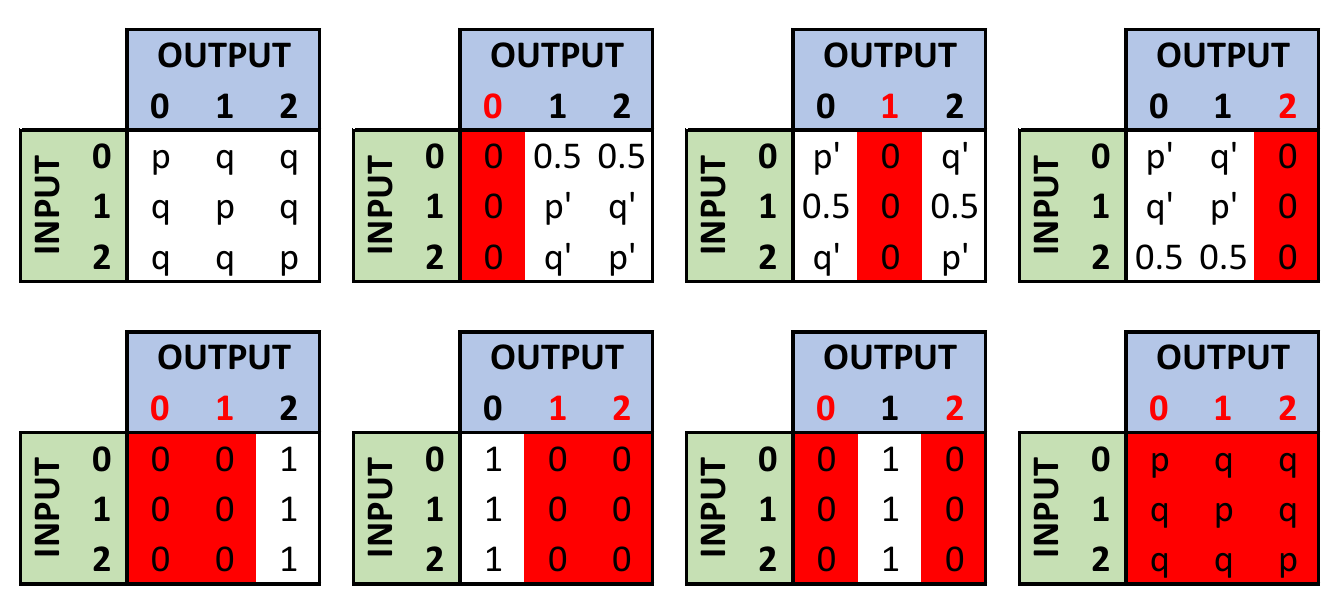}
\caption{Probability distribution used by the data sharing mechanism after eliminating states using correlations. $p = e^\epsilon / (e^\epsilon + 2)$, $q = 1 / (e^\epsilon + 2)$, $p' = p / (p + q)$, and $q' = q / (p + q)$. Red columns represent the eliminated states.}
\label{fig:prob1}
\end{figure}

\subsection{Improving Utility by Adjusting Probability Distributions}
\label{subsec:improveUtility}
In Section~\ref{subsec:improvePrivacy}, we proposed a data sharing mechanism to improve the privacy of randomized response mechanism against the correlation attack. The mechanism guarantees that the perturbed data $\mathcal{Y}^j = \lbrack y_{1}^j, \ldots,y_{l}^j \rbrack$ belonging to $I_j$ does not include any value that have low correlation with other SNPs. However, consistent with existing LDP-based mechanisms, the algorithm assigns equal sharing probabilities for each incorrect value of a SNP $i$. That is, probability of sharing any non-eliminated value $y_{i}^j$ (such that $y_{i}^j \neq x_{i}^j$) is the same. This also implies that utility of each incorrect value of a SNP are the same (for the data collector). However, this may cause significant utility loss since the accuracy of genomic analysis may significantly decrease as the values of shared SNPs deviate more from their original values (e.g., in genomic data sharing beacons or when studying haploinsufficiency). For genomic data, typically, the utility of each value of a SNP is different and the utility of a SNP may change depending on the purpose of data collection. Here, our goal is to improve the utility of shared data by modifying the probability distributions without violating $(\epsilon,T)$-dependent LDP.

To improve utility, we introduce new probability distributions, such that, for each shared SNP, the probability of deviating high from its ``useful values'' is small. Useful values of a SNP depend on how the data collector intends to use the collected SNPs. For instance, for genomic data sharing beacons, changing the value of a shared SNP with value $2$ to $1$ does not decrease the utility, but sharing it as $0$ may cause a significant utility loss. 
Similarly, while studying haploinsufficiency, obfuscating a SNP with value $2$ results in a significant utility loss while changing a $0$ to $1$ (or $1$ to $0$) does not cause a high utility loss. 
Here, to show how the proposed data sharing mechanism improves the utility, we focus on genomic data sharing beacons (similar analysis can be done for other uses of genomic data as well). 

\begin{figure}
\centering
\includegraphics[width=10cm,keepaspectratio]{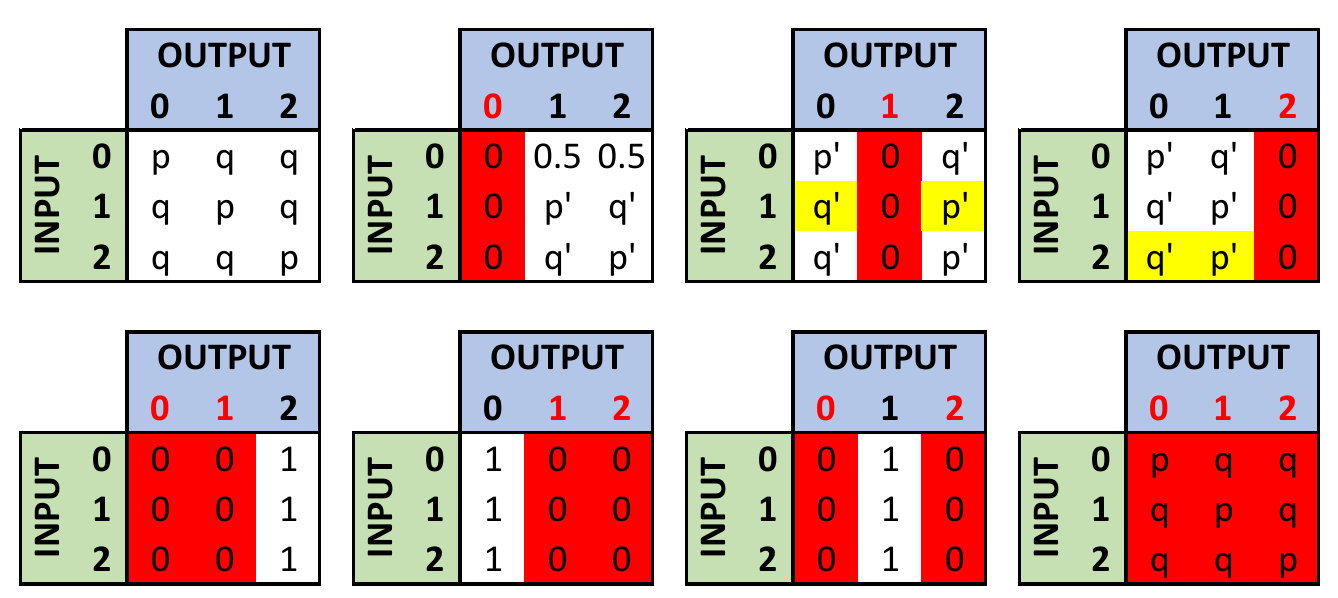}
\caption{Probability distribution used by the data sharing mechanism to improve utility of beacon queries (in genomic data sharing beacons). $p = e^\epsilon / (e^\epsilon + 2)$, $q = 1 / (e^\epsilon + 2)$, $p' = p / (p + q)$, and $q' = q / (p + q)$. Red columns represent the eliminated states. The differences with Figure~\ref{fig:prob1} are highlighted with yellow.}
\label{fig:prob2}
\end{figure}

Genomic data sharing beacons allow users (researchers) to learn whether individuals with specific alleles (nucleotides) of interest are present in their dataset. A user can submit a query, asking whether a genome exists in the beacon with a certain nucleotide at a certain position, and the beacon answers as ``yes'' or ``no''. Since having at least one minor allele is enough for a ``yes'' answer, having one minor allele (a SNP value of $1$) or two minor alleles (a SNP value of $2$) at a certain position is equivalent in terms of the utility of beacon's response. Therefore, if the correct value of a SNP is $1$ or $2$, sharing the incorrect value as $2$ or $1$ will have higher utility than sharing it as $0$. Considering this, we change the probability distributions of the data sharing mechanism (given in Section~\ref{subsec:improvePrivacy}) as follows to improve the utility. 
\begin{itemize}
    \item If there are three, one, or no possible outputs, the probability distributions described in Section~\ref{subsec:improvePrivacy} are used.
    \item If there are two possible outputs and $x_i^j$ is not eliminated, the algorithm shares $y_{i}^j$ as $x_{i}^j$ with probability $p / (p + q)$ and the incorrect value with probability $q / (p + q)$.
    \item If there are two possible outputs, $x_i^j$ is eliminated, and $x_i^j = 0$, the algorithm uses an adjusted probability distribution as $(0, 0.5, 0.5)$.
    \item If there are two possible outputs, $x_i^j$ is eliminated, and $x_i^j = 1$, the algorithm uses an adjusted probability distribution as $(q / (p + q), 0, p / (p + q))$.
    \item If there are two possible outputs, $x_i^j$ is eliminated, and $x_i^j = 2$, the algorithm uses an adjusted probability distribution as $(q / (p + q), p / (p + q), 0)$.
\end{itemize}

As in Section \ref{subsec:improvePrivacy}, $p = e^\epsilon / (e^\epsilon + 2)$ and $q = 1 / (e^\epsilon + 2)$. These probability distributions (which are also shown in Figure~\ref{fig:prob2}) still preserve the $e^{\epsilon}$ ratio between states. Note that for eliminating the states, the same process is used as described in Section~\ref{subsec:improvePrivacy}. To determine the processing order of the shared SNPs, the algorithm in Section~\ref{sec:order} is used.

\subsection{Proposed Data Sharing Algorithm}
\label{subsec:algorithm}

In Section~\ref{subsec:improvePrivacy}, we described how to improve privacy by eliminating statistically unlikely values for each SNP. In Section~\ref{subsec:improveUtility}, we explained how to modify probability distributions to improve utility of shared data for genomic data sharing beacons. Using these two ideas, we describe our proposed genomic data sharing algorithm in the following and provide the details for an individual $I_j$ in Algorithm~\ref{alg:alg1}. The algorithm processes all SNPs one by one and in each iteration, it computes a value to share for the SNP being processed (eventually, all SNPs are processed and they are shared at the same time with the data collector). The algorithm first eliminates the states having low correlations with the previously processed SNPs, as described in Section~\ref{subsec:improvePrivacy}. Two thresholds $\hat{\tau}$ and $\hat{\gamma}$ are used to determine the eliminated states. We evaluate the effect of these threshold values on utility and privacy in Section~\ref{subsec:parameters}. Then, the algorithm decides the shared value of the SNP using the probability distribution in Figure~\ref{fig:prob2}. This process is repeated for all SNPs and the SNP sequence to be shared (i.e., output) is determined. Since we consider all pairwise correlations, changing the order may change the utility of the proposed scheme by eliminating different states. We discuss the optimal selection of this order (in terms of utility) in Section~\ref{sec:order} and Algorithm~\ref{Value_Iteration} outputs the optimal order for each individual $I_j$ (i.e.,  ${\boldsymbol{\pi}^j}^*$). Due to the computational complexity of Algorithm~\ref{Value_Iteration}, we also propose a greedy algorithm in Section \ref{sec:heuristic}. Thus, either the output of the optimal or the greedy algorithm is used as the input for the proposed data sharing algorithm. In addition, we also visualize the selection of next SNP to be processed via a tree structured flowchart in Figure~\ref{fig:treei} (in Appendix \ref{app:tree}). Furthermore, in genomic data sharing, there is an interdependent privacy issue between genome donors and their family members. Thus, we also discuss the effect of the proposed data sharing mechanism on the privacy of donors' family members in Section~\ref{sec:discussion} and propose an algorithm to determine the maximum privacy budget (in terms of the $\epsilon$ parameter) that can be used by a genome donor while considering the privacy budgets of her family members.

\begin{algorithm}
\footnotesize
\SetKwInOut{Input}{input}
\SetKwInOut{Output}{output}
\Input{Original data $\mathcal{X}^j = \lbrack x_{1}^j, \ldots, x_{l}^j \rbrack$ of $I_j$,  processing order $\boldsymbol{\pi}^j$, privacy budget $\epsilon$, correlation threshold $\hat{\tau}$, inconsistency threshold $\hat{\gamma}$, pairwise correlations between data points.}
\Output{Perturbed data $\mathcal{Y}^j = \lbrack y_{1}^j, \ldots, y_{l}^j \rbrack$.}
      \ForAll{$a \in \{1,2,\ldots,l\}$}{
        $i \longleftarrow \boldsymbol{\pi}^j(a)$;
        
        $\hat{c_{i,0}^j}$, $\hat{c_{i,1}^j}$, $\hat{c_{i,2}^j}$ $\longleftarrow 0$;
        
        \ForAll{$b \in  \{1,2,\ldots,a-1\}$}{
            $k \longleftarrow \boldsymbol{\pi}^j(b)$;
            
            \uIf{$\Pr(x_i^j = 0~|~ x_k^j = y_{k}^j) < \hat{\tau}$} { $\hat{c_{i,0}^j} \longleftarrow \hat{c_{i,0}^j} + 1$;
	     	}
	     	\uElseIf{$\Pr(x_i^j = 1~|~ x_k^j = y_{k}^j) < \hat{\tau}$} { 
	     	$\hat{c_{i,1}^j} \longleftarrow \hat{c_{i,1}^j} + 1$;
	     	}
	     	\uElseIf{$\Pr(x_i^j = 0~|~ x_k^j = y_{k}^j) < \hat{\tau}$}{
	     $\hat{c_{i,2}^j} \longleftarrow \hat{c_{i,2}^j} + 1$;
	     	}
        }
        \uIf{$\hat{c_{i,0}^j} \geq \hat{\gamma} \cdot a$} { eliminate state 0;
	     		}\uElseIf{$\hat{c_{i,1}^j} \geq \hat{\gamma} \cdot a$} { eliminate state 1;
	     		}\uElseIf{$\hat{c_{i,2}^j} \geq \hat{\gamma} \cdot a$} { eliminate state 2;
	     		}
        
        $y_i^j \longleftarrow$ random value from non-eliminated states using probability distribution in Figure \ref{fig:prob2}.
     }
		\caption{\color{black} Genomic data sharing scheme for donor $I_j$ under $(\epsilon,T)$-dependent LDP. \color{black}
		}
\label{alg:alg1}
\end{algorithm}

\begin{lemma}
Given  a processing order, Algorithm \ref{alg:alg1} achieves $(\epsilon,l-1)$-dependent local differential privacy for each genomic data point that is not ineliminable. 
\label{privacy_proof}
\end{lemma}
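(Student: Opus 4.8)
The plan is to verify Definition~\ref{def:dldp} directly, SNP by SNP and for an arbitrary processing order, after first pinning down the value of $T$. The only genuine content is the arithmetic identities $p/q = e^{\epsilon}$ and $p'/q' = e^{\epsilon}$; everything else is bookkeeping that matches the ``possible values'' quantified in Definition~\ref{def:dldp} with the non-eliminated states produced by Algorithm~\ref{alg:alg1}.

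\textbf{Step 1 (the dependency bound).} Since Algorithm~\ref{alg:alg1} takes into account all pairwise correlations, the released value $y_i^j$ of a SNP $x_i^j$ is, in the worst case, influenced by the other $l-1$ SNPs in $\mathcal{X}^j$ --- concretely, by the SNPs that precede $x_i^j$ in the order $\boldsymbol{\pi}^j$, of which there are at most $l-1$. Hence every $x_i^j$ is $(l-1)$-dependent in the sense of Definition~\ref{def:T-aff}, and it remains to establish the $e^{\epsilon}$ bound with $\mathcal{Q}_i$ taken to be the set of previously processed SNPs, $\mathcal{Q}_i'$ their released values, and $\mathrm{Corr}_i$ the pairwise correlation model restricted to $(\hat\tau,\hat\gamma)$-consistency.

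\textbf{Step 2 (eliminations are independent of the value of $x_i^j$).} The structural observation I would lean on is that the set $\mathcal{E}_i\subseteq\{0,1,2\}$ of states eliminated for $x_i^j$ is computed purely from the population-level conditional probabilities $\Pr(x_i^j=s\mid x_k^j=y_k^j)$, the thresholds $\hat\tau,\hat\gamma$, and the already-released values $y_k^j$ of earlier SNPs; it never references the true value $x_i^j$. Consequently, at the step where $x_i^j$ is processed we may condition on the entire history (which is independent of $x_i^j$), after which $y_i^j$ is a function of $x_i^j$ and fresh randomness only; and for any state $d\notin\mathcal{E}_i$, conditioning the mechanism on $x_i^j=d$ yields exactly the distribution that Figure~\ref{fig:prob2} assigns with $d$ in the role of the ``correct value''. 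Moreover the possible released values $d_{\cdot\mid\mathcal{Q}_i',\mathrm{Corr}_i}$ of Definition~\ref{def:dldp} are precisely the non-eliminated states $\{0,1,2\}\setminus\mathcal{E}_i$ (and all of $\{0,1,2\}$ in the degenerate case $|\mathcal{E}_i|=3$, where the mechanism falls back to plain RR).

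\textbf{Step 3 (case analysis on $|\mathcal{E}_i|$).} Fixing an output $y\in\{0,1,2\}$ and two possible (non-eliminated) values $d_\alpha,d_\beta$, the ratio $\Pr(y\mid A(x_i^j=d_\alpha))/\Pr(y\mid A(x_i^j=d_\beta))$ is read off the relevant row of Figure~\ref{fig:prob2}:
\begin{itemize}
\item $|\mathcal{E}_i|=0$: the two distributions are permutations of $(p,q,q)$, so every such ratio lies in $\{e^{\epsilon},1,e^{-\epsilon}\}$ because $p/q=e^{\epsilon}$;
\item $|\mathcal{E}_i|=1$: the two distributions are permutations of $(p',q',0)$ on the two non-eliminated coordinates, and $p'/q'=\bigl(p/(p+q)\bigr)/\bigl(q/(p+q)\bigr)=p/q=e^{\epsilon}$, so again every ratio lies in $\{e^{\epsilon},1,e^{-\epsilon}\}$, the eliminated coordinate contributing only the indeterminate $0/0$, which we fix by the convention $0/0=1$ (equivalently, by restricting the supremum to outputs of positive probability);
\item $|\mathcal{E}_i|=2$: there is a unique non-eliminated state, so necessarily $d_\alpha=d_\beta$ and the ratio is $1$ --- this is exactly the configuration in which $x_i^j$ would be \emph{ineliminable} were that state its true value, which is excluded by hypothesis; if it is not the true value the ratio is still $1$;
\item $|\mathcal{E}_i|=3$: the mechanism uses RR with distribution $(p,q,q)$, which satisfies $\epsilon$-LDP, so the ratio is at most $e^{\epsilon}$.
\end{itemize}
In every case the ratio is $\le e^{\epsilon}$; taking the supremum over $y$, $d_\alpha$, $d_\beta$ gives Definition~\ref{def:dldp}, and since $x_i^j$ and the order were arbitrary the lemma follows.

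\textbf{Expected main obstacle.} The computational part ($p/q=p'/q'=e^{\epsilon}$) is immediate. The delicate points are (i) justifying that conditioning on a hypothetical non-true value is well-defined and produces the Figure~\ref{fig:prob2} distributions --- which is precisely what Step~2 buys --- and (ii) treating the boundary cases (the true value being eliminated, or all three states eliminated) so that they are seen to be exactly the situations the ``not ineliminable'' clause is meant to cover, rather than counterexamples. I would also state the $0/0$ convention explicitly, since zero-probability outputs arise in every case with an eliminated state.
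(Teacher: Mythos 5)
Your proof is correct and follows essentially the same route as the paper's, which simply asserts that $T=l-1$ follows from the pairwise correlation model and that the $e^{\epsilon}$ ratio is preserved by the reallocated probability mass; your case analysis on $|\mathcal{E}_i|$ and the observation that the elimination set is computed without reference to the true value of $x_i^j$ are just a careful expansion of that one-line argument. The explicit handling of the $0/0$ outputs and of the ineliminable case is a welcome addition but does not change the approach.
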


\begin{proof}[Proof]
The proof directly follows from the reallocation of probability mass used in the RR mechanism. Since $\mathrm{Corr}_i$ is the pairwise correlation between SNPs, we have $T=l-1$. Besides, the $e^{\epsilon}$ ratio is preserved in the modified RR mechanism, and hence the condition in Definition \ref{def:dldp} can always hold for ineliminable SNPs.
\end{proof}
\section{Optimal Data Processing Order for the Proposed Genomic Data Sharing Mechanism}
\label{sec:order}

Algorithm~\ref{alg:alg1} considers/processes one SNP at a time and as discussed, different processing orders may cause elimination of different states of a SNP, which, in turn, may change the utility of the shared data. Assuming there are totally $l$ SNPs in $\mathcal{X}^j$ of an individual $I_j$, Algorithm~\ref{alg:alg1} can process them in $l!$ different orders.  
As a result, determining an optimal order of processing to maximize the utility of the shared sequence of SNPs is a critical and challenging problem. In this section, we formulate the problem of determining the optimal order of processing as a Markov Decision Processes (MDP)~\cite{sutton2018reinforcement}, which can be solved by value iteration using dynamic programming. Note that the algorithm locally processes all SNPs, and then perturbed data is shared all at once. Hence, the data collector does not see or observe the order of processing. 

Since we consider genomic data sharing beacons to study the utility of shared data (as in Section~\ref{subsec:improveUtility}) and the proposed sharing scheme is non-deterministic, we aim at achieving the maximum expected utility for the beacon responses using the shared SNPs. Note that similar analysis can be done for other uses of genomic data as well. Beacon utility is typically measured over a population of individuals, however, in this work, we consider an optimal processing order, which maximizes the expected beacon utility for each individual. The reason is twofold: (i) an individual does not have access to other individuals' SNPs and (ii) a population's maximum expected beacon utility can be achieved if all individuals' maximum expected beacon utility are obtained due to the following Lemma.

\begin{lemma}
Maximizing the expectation of individuals' beacon utility is a sufficient condition for maximizing the expectation of a population's beacon utility.
\label{individual_utility}
\end{lemma}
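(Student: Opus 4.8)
The plan is to show that the population's beacon utility, under any reasonable utility metric for beacon responses, decomposes (or at least is bounded below by a quantity that decomposes) as a sum or average over the individual donors, so that linearity of expectation reduces maximizing the population's expected utility to independently maximizing each individual's expected utility. First I would fix the notion of beacon utility being used in Section~\ref{subsec:improveUtility}: a beacon query at a given position is answered ``yes'' if at least one donor carries a minor allele there, and the utility is measured by the correctness of these answers across queries. I would write the population utility as an aggregate $U = \sum_{j=1}^n u_j$ (or $\frac1n\sum_j u_j$), where $u_j$ depends only on the perturbed sequence $\mathcal{Y}^j$ of donor $I_j$ and the query set, capturing $I_j$'s contribution to correct beacon answers.

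Next I would invoke two facts established earlier in the paper. (i) Algorithm~\ref{alg:alg1} perturbs each donor's SNPs independently of the other donors — a donor has no access to others' data — so the random variables $\mathcal{Y}^1,\dots,\mathcal{Y}^n$ are mutually independent, and each $u_j$ is a function of $\mathcal{Y}^j$ alone. (ii) Consequently $\mathbb{E}[U] = \sum_{j=1}^n \mathbb{E}[u_j]$ by linearity of expectation, with no independence actually needed for this step, though independence is what makes the per-donor optimization problems separable (the optimal processing order $\boldsymbol{\pi}^j$ for $I_j$ does not interact with the choices made for $I_{j'}$). Then the argument is immediate: if each $\boldsymbol{\pi}^j$ is chosen to maximize $\mathbb{E}[u_j]$, the resulting joint choice maximizes $\sum_j \mathbb{E}[u_j] = \mathbb{E}[U]$, so maximizing every individual's expected beacon utility suffices to maximize the population's expected beacon utility. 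I would phrase the conclusion exactly as ``sufficient condition'' to match the statement, noting it need not be necessary (many non-optimal per-donor profiles could still sum to the maximum only in degenerate cases, but any all-optimal profile certainly attains it).

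The main obstacle is the decomposition in the first step: the beacon's ``yes/no'' answer at a position is a monotone OR over donors, which is not additive in the donors, so $u_j$ cannot literally be ``$I_j$'s share of correct answers'' without care. I would handle this by defining the per-query utility contribution so that the aggregate is genuinely separable — for instance, attributing to each donor the change in expected answer-correctness from perturbing that donor's SNP while holding a reference (e.g., the true values, or the already-fixed perturbed values) for the others, or simply by adopting the additive beacon-utility surrogate the paper already uses when it speaks of ``expected beacon utility for each individual.'' Once the utility functional is written in the form $\sum_j u_j(\mathcal{Y}^j)$, the rest is routine linearity of expectation plus the independence of the donors' randomization, and the lemma follows. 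If instead the paper intends utility as a strictly OR-based correctness count, I would argue the weaker but still sufficient bound: monotonicity of the OR implies that increasing the probability each $\mathcal{Y}^j$ matches $\mathcal{X}^j$ in its ``useful'' coordinates can only increase the expected number of correctly answered queries, so pointwise-optimal per-donor behavior dominates, which again gives sufficiency.
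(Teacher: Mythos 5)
Your proposal correctly isolates the two ingredients the paper's own proof relies on --- independence of the donors' randomizations and a reduction of the population utility to per-donor quantities --- and, crucially, you correctly name the obstacle: the beacon answer is an OR over donors, so the population utility does \emph{not} decompose exactly as $\sum_j u_j(\mathcal{Y}^j)$. Where you and the paper diverge is in how that obstacle is handled. Your primary route (an exact additive decomposition, or a marginal-contribution attribution) cannot be completed: marginal contributions to a non-additive correctness functional do not sum to the total, so linearity of expectation has nothing additive to act on. The paper instead settles for an \emph{inequality}: for each SNP $i$ it bounds $\Pr(\sum_j y_i^j=0\mid\sum_j x_i^j=0)\leq\sum_j\Pr(y_i^j=0\mid x_i^j=0)$ and, via a union bound combined with the independence of the donors' perturbations, $\Pr(\exists\, y_i^j\neq 0\mid\exists\, x_i^k\neq 0)\leq\sum_j\Pr(y_i^j\neq 0\mid x_i^j\neq 0)$, concluding $\mathbb{U}_{\text{pop}}\leq\sum_j\mathbb{U}_{I_j}$ and hence that the population optimum $\max\{\sum_j\mathbb{U}_{I_j}^*,1\}$ is attained once every individual optimum is. Your fallback via monotonicity of the OR is actually closer to a rigorous argument than the paper's union bound in the ``no''-answer case (there, correctness is a product of per-donor probabilities, monotone in each factor), but it shares a gap the paper also glosses over: the individual's optimization (the MDP over processing orders) maximizes a \emph{sum} of per-SNP utilities across the whole sequence, so it need not maximize each per-query correctness probability separately, and per-query monotonicity does not immediately transfer. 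In short, your diagnosis is right and your fallback is defensible, but neither the exact decomposition nor the marginal-attribution fix would go through as written; to complete the argument you should commit to the inequality/monotonicity route and make explicit the assumption that the per-donor optimum maximizes every per-query term.
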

\begin{proof}[Proof]
Proof is given in Appendix \ref{app:proof}
\end{proof}

The sufficient condition in Lemma~\ref{individual_utility} can easily be extended to other genomic data sharing scenarios as long as the individuals share their SNPs independently from each other.

\subsection{Determining the Optimal Processing Order via Markov Decision Processes (MDP)}
Here, we proceed with obtaining the optimal order of processing which maximizes individuals' expected beacon utility. First, we model the SNP state elimination and   processing order as an agent-environment interaction framework, where the agent is a specific individual (donor), the environment is the proposed SNP sharing scheme considering correlations (in Algorithm~\ref{alg:alg1}), and the interaction follows a MDP.

For instance, consider the individual (donor) $I_j$ in the population. Then, her MDP interaction with the environment is characterized as a tuple $\{\mathcal{S}^j,s_1^j,\mathcal{A}^j,\Pr^j(\cdot),\mathcal{R}^j,H^j\}$, where $\mathcal{S}^j$ is the set of all MDP states of individual $I_j$, $s_1^j$ is her initial MDP state, $\mathcal{A}^j$ is her action set, $\Pr^j(\cdot)$ is the transition probability between two MDP states of $I_j$, $\mathcal{R}^j$ is her set of rewards, and $H^j$ is the horizon of the MDP (i.e., number of rounds in discrete time). In our case, $H=l$ (number of SNPs to be processed), and  $s_1^j=\emptyset$. 
At each time step $i\in\{1,2,\cdots,H\}$ (i.e., when individual $I_j$ processes her $i$th SNP), the agent chooses an action $a_i^j$ from her action pool $\mathcal{A}_i^j\subset \mathcal{A}^j$ (i.e.,  selects a specific SNP from her remaining unprocessed SNPs), where $\mathcal{A}_i^j$ is the set of remaining unprocessed SNPs and $\mathcal{A}^j$ is the set of all SNPs of individual $I_j$. 
Then, the environment provides the agent with a MDP state $s_i^j$ and a reward $r_i^j$. In particular, $s_i^j = \{y_1^j,y_2^j,\cdots,y_i^j\}$ (i.e., the list  recording all  observations of previously processed SNPs of individual $I_j$) and $r_i^j$ is the utility of the beacon response on $y_i^j$, and hence, we have $r_i^j\in\mathcal{R}^j=\{0,1\}$. 
After observing $s_i^j$ and receiving $r_i^j$, the agent takes the next action $a_{i+1}^j\in\mathcal{A}_{i+1}^j$, which causes $s_i^j$ to transit to $s_{i+1}^j$ via the transition probability $\Pr(s_{i+1}^j|s_i^j,a_i^j,s_{i-1}^j,a_{i-1}^j,\cdots,s_1^j,a_1^j) = \Pr(s_{i+1}^j|s_i^j,a_i^j)$. Here, the equality holds due to the Markov property~\cite{sutton2018reinforcement} and $\Pr(s_{i+1}^j|s_i^j,a_i^j)$ is determined by the leaf nodes in Figure~\ref{fig:treei} (in Appendix \ref{app:tree}). An illustration of the  MDP interaction between the agent (individual $I_j$) and the environment (Algorithm \ref{alg:alg1}) at time step $i$ (processing the $i$th SNP) is shown in Figure~\ref{fig:DP}.
\begin{figure}
    \centering
    \includegraphics[width=0.45\textwidth]{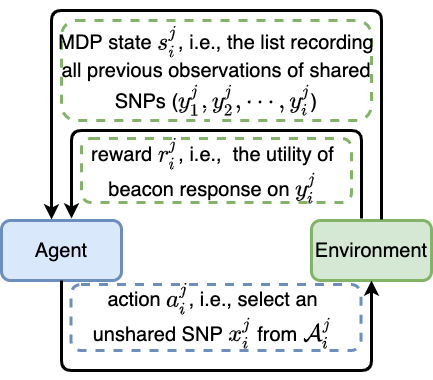}
    \caption{The interaction between the agent (individual $I_j$) and the environment (Algorithm \ref{alg:alg1}) at time step $i$, i.e., when processing the $i$th SNP ($x_i^j$) of individual $I_j$.} 
    \label{fig:DP}
\end{figure}

Since the optimal order can be predetermined and should be invariant in time, we model the agent's (individual $I_j$) decision policy at time step $i$ as a deterministic mapping as $\pi_i^j: \mathcal{S}^j\rightarrow\mathcal{A}_i^j$, i.e., $a_{i}^j = \pi_i^j(s_i)$, $\forall i \in\{1,2,\cdots,l\}$. Let $\boldsymbol{\pi}^j = \{\pi_1^j,\pi_2^j,\cdots,\pi_l^j\}$ be the sequence of decision policies of the agent. Due to the non-deterministic behavior of Algorithm~\ref{alg:alg1}, we characterize the environment's behavior on individual $I_j$ as a probabilistic mapping as $p:\mathcal{S}^j\times\mathcal{A}_i^j\rightarrow\mathcal{S}^j\times\mathcal{R}^j$ (i.e., $\Pr(s_{i+1}^j|s_i^j,a_i^j)$). Furthermore, we define the future cumulative return for individual $I_j$ starting from MDP state $s_i^j$ as $R_i^j=\sum_{\eta=i}^{\eta=l}r_{\eta}^j$ and the state-value function of MDP state $s_i^j$ under policy $\pi_i^j$ as $v^{\pi_i^j}(s_i) = \mathbb{U}_p[R_{i+1}^j|s_i^j]$ ($\mathbb{U}_{p}[\cdot]$ indicates that utility is considered in an expected manner with respect to the environment's probabilistic mapping $p$). Then, to maximize an individual's expected beacon utility at time step $i$, the agent takes the optimal decision ${\pi_i^j}^*\in \text{argmax}_{\pi_i^j}v^{\pi_i^j}(s_i), \forall s_i^j\in\mathcal{S}^j$ and $\in$ suggests that ${\pi_i^j}^*$ may not be unique.

Thus, we have formulated the optimal order of processing problem as a finite-horizon MDP problem, whose state, action, and reward sets are all finite and dynamics are characterized by a finite set of probabilities (i.e., $\Pr(s_i^j|s_{i-1}^j,a_{i-1}^j)$). The finite-horizon MDP problem is P-complete, as it can be reduced from the circuit value problem, which is a well-known P-complete problem~\cite{papadimitriou1987complexity}. In the literature, exact optimal solution of finite-horizon MDP problem can be obtained by quite a few methods, for example value iteration, policy iteration, or linear programming~\cite{bertsekas1995dynamic}. In Algorithm~\ref{Value_Iteration}, we provide a value iteration~\cite{sutton2018reinforcement} based approach to determine the optimal order of processing for an individual. 

Algorithm~\ref{Value_Iteration} is implemented using dynamic programming starting from the last time step, and it has a computational complexity of  $\mathcal{O}(|\mathcal{S}^j|^2|\mathcal{A}^j|)$ for individual $I_j$~\cite{sutton2018reinforcement}. For finite-horizon MDP, the number of MDP states grows exponentially with the number of variables, which is known as the curse of dimensionality. For example, in our case, at time step $i$, Algorithm~\ref{Value_Iteration} needs to calculate the state-value function for $3^i$ states. In the literature, many approaches have been proposed to address this issue, such as state reduction~\cite{dean1997model} and logical representations~\cite{boutilier2000stochastic}, which, however are outside the scope of this paper. Therefore, Algorithm~\ref{Value_Iteration} may be computationally expensive to process large amount of data, and hence in the following section, we propose a heuristic approach to process long sequence of SNPs.

\begin{algorithm}
\footnotesize
\SetKwInOut{Input}{input}
\SetKwInOut{Output}{output}
\Input{the MDP tuple $\{\mathcal{S}^j,s_1^j,\mathcal{A}^j,\Pr^j(\cdot),\mathcal{R}^j,H^j\}$ of $I_j$.}
\Output{the optimal order of processing that maximizes  individual $I_j$'s expected beacon utility, i.e., ${\boldsymbol{\pi}^j}^* = \{{\pi^j}^*(s_1^j),{\pi^j}^*(s_2^j),\cdots,{\pi^j}^*(s_{l}^j)\}$.}
\ForAll{$i\in\{1,2,\cdots,l\}$}{
randomly initialize $v(s_i^j)$, $\forall s_i^j\in\mathcal{S}^j$;

randomly initialize a positive parameter $\delta$;

\While{$\delta> 0$}{
      \ForAll{\texttt{$s_i^j\in\mathcal{S}^j$}}{
        $c\leftarrow v(s_i^j)$;
        
        $v(s_i^j)\leftarrow \text{max}_{a^j}  \sum p(s_{i+1}^j,r_{i+1}^j|s_i^j,a^j)[r_i^j+v(s_{i+1}^j)]$; \label{probx}
        
        $\delta \leftarrow \text{max}\{\delta,|c-v(s_{i}^j)|\}$;
     }
     }
         ${\pi^j}^*(s_i^j) = \text{argmax}_{a^j}p(s_{i+1}^j,r_{i+1}^j|s_i^j,a^j)[r_i^j+v(s_{i+1}^j)]$;
     }		\caption{Determining the optimal order of processing for individual $I_j$.}
\label{Value_Iteration}
\end{algorithm}

\subsection{A Heuristic Approach}\label{sec:heuristic}

In this work, we consider sharing  thousands of SNPs of individuals in a  population. As a consequence, it is computationally prohibitive to obtain the exact optimal order of processing for each individual. 
We propose the following  heuristic approach for an individual $I_j$ to process her SNPs in a local greedy manner. Specifically, at each time step $i$, the algorithm selects the SNP with the maximum expected beacon utility, i.e., $a_i^j = \text{argmax}_{a^j\in\mathcal{A}_i^j}\mathbb{U}_{a^j}$, where $\mathcal{A}_i^j$ is the set of remaining SNPs of individual $I_j$, and $\mathbb{U}_{a^j}$ denotes the expected immediate utility if individual $I_j$ selects SNP $a^j$ and it is determined by a certain leaf node in Figure~\ref{fig:treei} (in Appendix \ref{app:tree}). After evaluating the condition in the root node using a SNP, only one leaf node can be activated. For example, without loss of generality, assume that at time step $l-1$, SNPs $x_i^j$ and $x_k^j$ are left in $\mathcal{A}_{l-1}^j$, after the elimination check, $x_i^j$ can activate the leaf node characterized by $(0,\frac{p}{p+q},\frac{q}{p+q})$ and $U=1$, and $x_k^j$ can activate the leaf node characterized by $(\frac{q}{p+q},0,\frac{q}{p+q})$ and $U=\frac{q}{p+q}$. Then, the heuristic algorithm selects SNP $x_i^j$ to process at time step $l-1$. If there is a tie between two SNPs, we randomly choose one. We compare this heuristic approach with the optimal algorithm (in Algorithm~\ref{Value_Iteration}) in Section~\ref{subsec:orderEval}.
\section{Evaluation}
\label{sec:evaluation}

We implemented the proposed data sharing scheme in Section~\ref{subsec:algorithm} and used a real genomic dataset containing the genomes of the Utah residents with Northern and Western European ancestry (CEU) population of the HapMap project~\cite{international2003international} for evaluation. We used 1000 SNPs of 156 individuals from this dataset for our evaluations. Using this dataset, we computed all pairwise correlations between SNPs. For each 1 million ($1000\times 1000$) SNP pairs, we computed 9 ($3\times3$) conditional probabilities. Hence, we totally computed 9 million conditional probabilities (for all pairwise correlations between all SNPs). Note that, to quantify the privacy of the proposed scheme against the strongest attacks, we used the same dataset to compute the attacker's background knowledge. However, in practice, the attacker may use different datasets to compute such correlations and its attacks may become less successful when less accurate statistics are used. We also assumed that each donor has the same privacy budget ($\epsilon$). To quantify privacy, we used the attacker's estimation error. Estimation error is a commonly used metric to quantify genomic privacy~\cite{wagner2017evaluating}, which quantifies the average distance of the attacker's inferred SNP values from the original data ($\mathcal{X}^j$) as 
\begin{displaymath}
E = \frac{\sum\limits_{v\in\{0,1,2\}; k\in\{1,...,l\}} \Pr(x_k^j = v)||x_k^j-v||}{l},
\end{displaymath}
where $\Pr(x_k^j = v)$ is the attacker's inference probability for $x_k^j$ being $v$. We assume the attacker's only knowledge is $p$ and $q$ initially, which are computed based on $\epsilon$. Then, using the correlations, the attacker improves its knowledge by eliminating the statistically less likely values. For the eliminated states, attacker sets the corresponding probability to $0$. Since $||x_k^j-v||$ can be at most 2 for genomic data, $E$ is always in the range $[0,2]$, where higher $E$ indicates better privacy. Thus, when the attacker's estimation error decreases, the inference power of the attacker (e.g., to infer the predisposition of a target individual to a disease) increases accordingly. To quantify the utility, we used the accuracy of beacon responses. For each SNP, we first run the beacon queries using the original values and then run the same queries with the perturbed values. Let the number of beacon responses (SNPs) for which we obtain the same answer for both original data and perturbed data be $n_s$. We computed the accuracy as $A = n_s / l$ ($l$ is the total number of beacon queries), which is always in the range $[0,1]$.

In the following, we first compare the proposed algorithm with the original RR mechanism in terms of privacy and utility. Then, we evaluate the effect of the design parameters on privacy and utility. Finally, we show the effect of the order of processing on utility.

\subsection{Comparison with the Original Randomized Response Mechanism}
\label{subsec:compareLDP}

As we discussed in Section~\ref{subsec:attack}, the original randomized response (RR) mechanism is vulnerable to correlation attacks because when a given state of a SNP is loosely correlated with at least $\gamma \cdot l$ other SNPs, the attacker can eliminate that state, and hence improve its inference power for the correct value of the SNP. In Figure~\ref{fig:ldpComp}, we show this vulnerability in terms of attacker's estimation error (blue and red curves in the figure). We observed that attacker's estimation error is the smallest (i.e., its inference power is the strongest) when the correlation threshold of the attacker ($\tau$) is $0.02$ and inconsistency threshold of the attacker ($\gamma$) is $0.03$, and hence we used these parameters for the attack. 

Under the same settings, we also computed the estimation error provided by the proposed algorithm when $\hat{\tau} = 0.02$ and $\hat{\gamma} = 0.03$. Therefore, during data sharing, we eliminated states of the SNPs having correlation less than $\hat{\tau}  = 0.02$ (the correlation threshold of the algorithm) with at least $\hat{\gamma} = 0.03$ of the previously shared SNPs (in Section~\ref{subsec:parameters}, we also evaluate the effect of these parameters on privacy and utility). 
We also let the attacker conduct the same attack in Section~\ref{subsec:attack} with the same attack parameters as before. Figure~\ref{fig:ldpComp} shows the comparison of the proposed scheme with original RR mechanism (green curve in the figure is the privacy provided by the proposed scheme). The results clearly show that the proposed method improves the privacy provided by RR after correlation attack. For instance, for $\epsilon = 1$, the proposed scheme provides approximately $25\%$ improvement in privacy compared to the RR mechanism. Note that the privacy of RR before the attack (blue curve in the figure) is computed by assuming the attacker does not use correlations. Hence, when the attacker uses correlations, it is not possible to reach that level of privacy with any data sharing mechanism and the privacy inevitably decreases. With the proposed scheme, we reduce this decrease in the privacy. To observe the limits of the proposed approach, we performed the correlation attack by assuming the attacker has 0 value for all SNPs (which is the mostly observed value in genomic data) and we observed the attacker's estimation error as 0.66 (under the same experimental settings) after the correlation attack. Hence, with any mechanism it is not possible to exceed 0.66 after correlation attack and the privacy provided by the proposed scheme is remarkable.

\begin{figure}
\centering
\includegraphics[width=8cm,keepaspectratio]{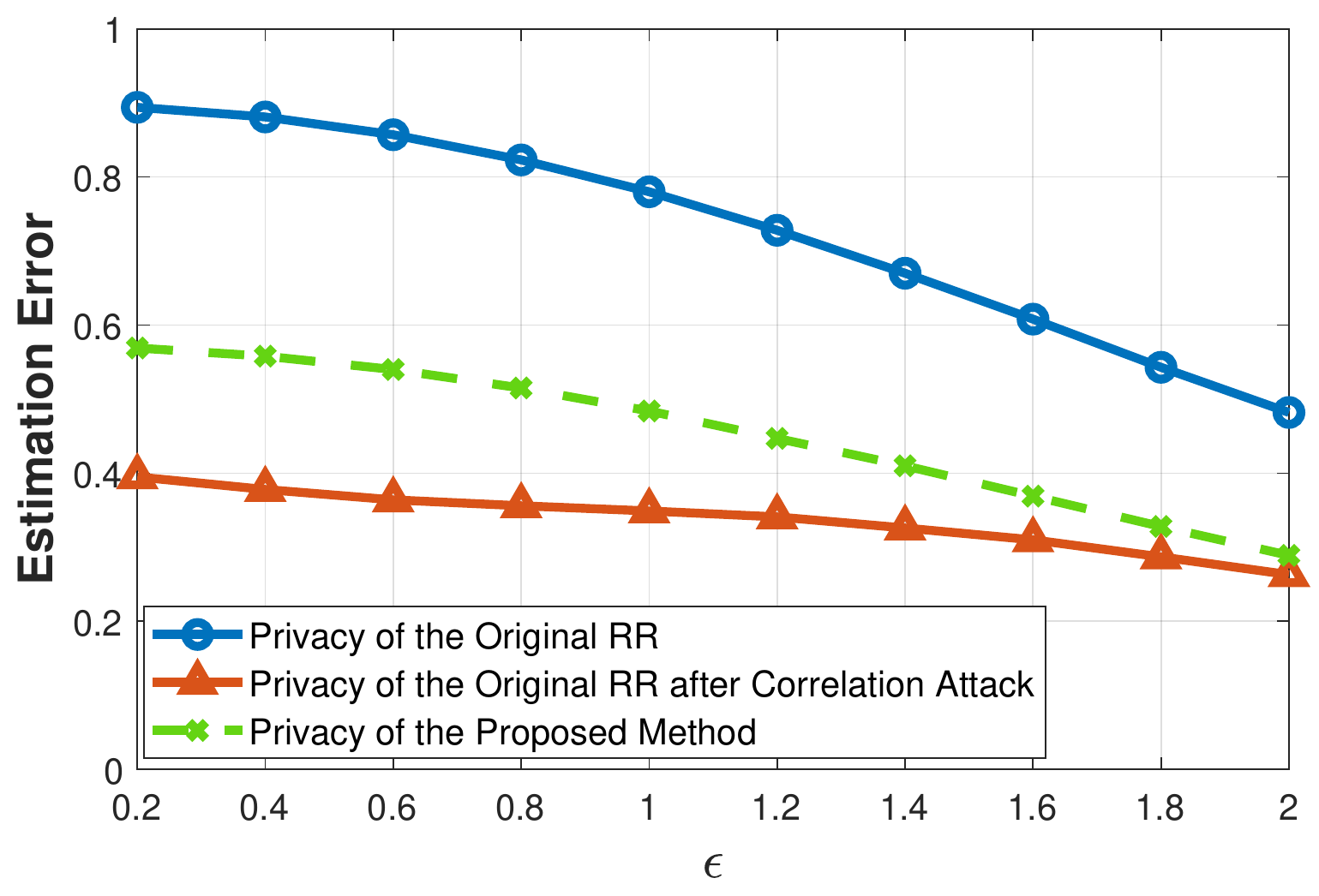}
\caption{Comparison of the proposed data sharing mechanism with original RR mechanism in terms of attacker's estimation error.}
\label{fig:ldpComp}
\end{figure}

Focusing on genomic data sharing beacons, we also compared the utility of shared data using the proposed scheme with the original RR mechanism in terms of accuracy of beacon answers (using the accuracy metric introduced before). We randomly selected 60 people from the population and used their 1000 SNPs to respond to the beacon queries. For 257 SNPs there was no minor allele, and hence the original response of the beacon query was ``no''. There was at least one minor allele in 60 people for the remaining 743 SNPs (and hence, the original response of the beacon query was ``yes''). 

For the original RR mechanism, we shared 1000 SNPs of 60 individuals after perturbation. In the RR mechanism, the data collector eliminates the noise by estimating the frequency of each value using the sharing probabilities as described in Section~\ref{subsec:randomized_response}. Hence, if $60 \cdot p$ or more individuals report $0$ for the value of a SNP (after perturbation), we considered the answer of beacon as ``no''. For the proposed data sharing scheme, we did not apply such an estimation since in the proposed scheme, the sharing probabilities of the states are different for each SNP.
Figure~\ref{fig:beaconYesNo} shows the accuracy of the beacon for 1000 queries. We observed that our proposed scheme provides approximately 95\% accuracy even for small values of $\epsilon$, while the accuracy of the RR mechanism is less than 70\% for small $\epsilon$ values and it only reaches to 85\% when $\epsilon$ increases. We provide the accuracy evaluation for the ``yes'' and ``no'' responses separately in Appendix~\ref{app:accuracy}. \color{black} Note that we do not quantify the utility over the probability of correctly reporting a point. We quantify the utility over the accuracy of beacon answers. When the answer of the beacon query is ``yes'', the original response of the beacon is mostly preserved after perturbation in both the original RR and the proposed mechanism, as shown in Appendix~\ref{app:accuracy} (while the proposed mechanism still outperforms the RR mechanism, especially for smaller $\epsilon$ values). On the other hand, when the original answer of a beacon query is ``no'', all individuals must report 0 for that SNP (to preserve the accuracy of the response). In this case, applying the original RR cannot provide high accuracy when $\epsilon$ is small, because with high probability, at least one individual reports its SNP value as 1 or 2 (i.e., incorrectly). As we also show in Appendix~\ref{app:accuracy}, our proposed approach significantly outperforms the RR mechanism in terms of the accuracy of the ``no'' responses. \color{black} Therefore, we conclude that the proposed scheme provides significantly better utility than the original RR mechanism. 

Although here we evaluated utility for genomic data sharing beacons, similar utility analyses can be done for other applications as well. Since the proposed scheme eliminates statistically unlikely values, the proposed scheme will still outperform the original RR mechanism under similar settings. Since the proposed data sharing mechanism considers the correlations with the previously shared data points (as in Algorithm~\ref{alg:alg1}) its computational complexity is $\mathcal{O}(l^2)$, where $l$ is the number of shared SNPs of a donor.

One alternative approach to improve privacy in the original RR mechanism can be adding a post-processing step that includes identifying the SNPs having low correlations with the other SNPs and replacing them with the values that have high correlations. Such an approach can be useful to prevent correlation attacks due to eliminating less likely values. However, this approach provides much lower utility compared to the proposed mechanism since the proposed mechanism improves utility by adjusting probability distributions and optimizing the order of processing. We also implemented this alternative post-processing approach and compared with the proposed mechanism. We observed similar estimation error with the proposed mechanism, which shows that this approach can also prevent correlation attacks. However, as shown in Table \ref{table:postprocess} in Appendix \ref{app:postprocess}, post-processing approach provides even lower utility than the original RR mechanism without post-processing, because it becomes harder to do efficient estimation after the post-processing. Hence, the proposed mechanism outperforms the original RR mechanism even if post-processing is applied. 

\begin{figure}
\centering
\includegraphics[width=8cm,keepaspectratio]{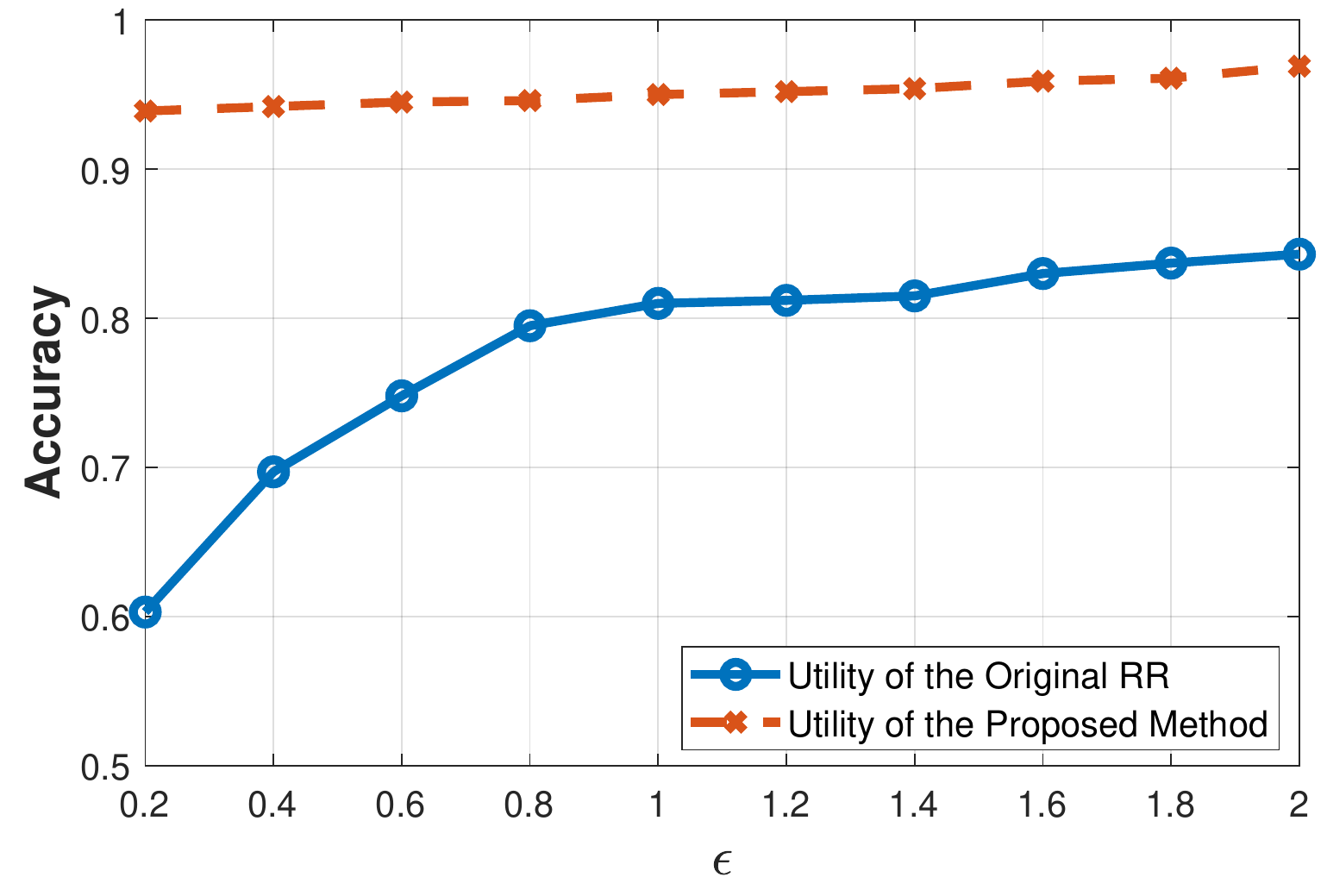}
\caption{Comparison of the proposed method with original randomized response mechanism in terms of utility. Utility is measured as the accuracy of responses provided from a genomic data sharing beacon.}
\label{fig:beaconYesNo}
\end{figure}

\subsection{The Effect of Parameters on Utility and Privacy}
\label{subsec:parameters}

\begin{table}
\small
\centering
\caption{Privacy (in terms of estimation error) of the original randomized response mechanism after the correlation attack for different values of $\gamma$ (inconsistency threshold of the attacker) when $\tau = 0.02$ and $\epsilon = 1$.}  
\begin{tabular}{ c|c|c|c|c|c }
\hline
$\gamma$ & 0.01 & 0.02 & 0.03 & 0.04 & 0.05 \\
\hline
Estimation error ($E$) & 0.491 & 0.380 & 0.348 & 0.368 & 0.415 \\
\hline
\end{tabular}
\label{table:gamma}
\end{table}

In Section~\ref{subsec:compareLDP}, we used the correlation threshold of the attacker ($\tau$) as 0.02 and inconsistency threshold of the attacker ($\gamma$) as 0.03 in its correlation attack. In our experiments, these parameters provided the strongest attack against the original RR mechanism. In Table~\ref{table:gamma}, we show how the estimation error of the attacker changes for different values of $\gamma$ when $\epsilon = 1$ and $\tau = 0.02$. When $\epsilon = 1$ in the original RR mechanism, we computed the estimation error before the attack as 0.78. Increasing $\gamma$ results in eliminating less states by the attacker. For instance, if attacker selects $\gamma = 0.5$, it cannot eliminate any states and the estimation is still $0.78$. As $\gamma$ decreases, more states are eliminated and the estimation error keeps decreasing up to a point (up to $\gamma=0.03$ in our experiments, which provides the smallest estimation error). 
As we further decreased $\gamma$ beyond this point, we observed higher estimation error values, since as $\gamma$ approaches to $0$, all $3$ states are eliminated for more SNPs. Also, when $\gamma = 0$, we computed the estimation error as 0.78 as well. We also observed similar results for different values of $\epsilon$. Similarly, when $\gamma = 0.03$, we obtained the smallest estimation error for the attacker (and hence the strongest attack) when $\tau=0.02$.

\begin{table}
\small
\centering
\caption{Privacy (in terms of estimation error) and utility (in terms of accuracy of beacon responses) of the proposed scheme for different values of $\hat{\tau}$ (correlation threshold) when $\hat{\gamma} = 0.03$ and $\epsilon = 1$. Estimation error is computed by assuming the attacker uses $\tau = 0.02$ and $\gamma = 0.03$.}
\begin{tabular}{ c|c|c|c|c|c }
\hline
$\hat{\tau}$ & 0.02 & 0.04 & 0.06 & 0.08 & 0.1 \\
\hline
Estimation Error ($E$) & 0.483 & 0.486 & 0.492 & 0.499 & 0.503 \\
\hline
Accuracy ($A$) & 0.950 & 0.942 & 0.918 & 0.892 & 0.865 \\
\hline
\end{tabular}
\label{table:tauEffect}
\end{table}

\begin{table}
\small
\centering
\caption{Privacy (in terms of estimation error) and utility (in terms of accuracy of beacon responses) of the proposed scheme for different values of $\hat{\gamma}$ (inconsistency threshold) when $\hat{\tau} = 0.02$ and $\epsilon = 1$. Estimation error is computed by assuming the attacker uses $\tau = 0.02$ and $\gamma = 0.03$.}
\begin{tabular}{ c|c|c|c|c|c }
\hline
$\hat{\gamma}$ & 0.01 & 0.02 & 0.03 & 0.04 & 0.05 \\
\hline
Estimation Error ($E$) & 0.490 & 0.487 & 0.483 & 0.479 & 0.476\\
\hline
Accuracy ($A$) & 0.932 & 0.940 & 0.950 & 0.954 & 0.959 \\
\hline
\end{tabular}
\label{table:gammaEffect}
\end{table}

Since the attack against the original RR mechanism is the strongest when $\tau=0.02$ and $\gamma=0.03$, we set the correlation parameters of the proposed data sharing algorithm the same as the attack parameters (i.e., $\hat{\tau} = 0.02$ and $\hat{\gamma} = 0.03$) in Section~\ref{subsec:compareLDP}. Here, we study the effect of changing these parameters on the performance of the proposed mechanism. We assume that the attacker does not know the parameters ($\hat{\tau}$ and $\hat{\gamma}$) used in the algorithm and uses the parameters providing the strongest attack ($\tau = 0.02$ and $\gamma = 0.03$) against the original RR mechanism. First, we evaluated the effect of correlation threshold $\hat{\tau}$ on privacy and utility (all correlations that are smaller than $\hat{\tau}$ are considered as low by the algorithm). 
Our results are shown in Table~\ref{table:tauEffect}. We observed that increasing $\hat{\tau}$ increases the attacker's estimation error since we assume the attacker does not know $\hat{\tau}$ and uses $\tau = 0.02$ in its attack. However, using $\hat{\tau} = 0.02$ provided the best utility for the proposed algorithm. 
Since there is no correlation (conditional probability) that is less than $0.02$ in our dataset, the minimum possible value that we can use for $\hat{\tau}$ in the algorithm is $0.02$. We also show the privacy and utility of the proposed scheme for different values of $\hat{\gamma}$ in Table~\ref{table:gammaEffect}. We observed that increasing $\hat{\gamma}$ slightly increases utility, however, the privacy also decreases at the same time.

In the previous experiments (Table~\ref{table:tauEffect} and Table~\ref{table:gammaEffect}), we assumed that the attacker does not know the parameters used in the experiments and uses $\tau = 0.02$ and $\gamma = 0.03$. However, the attacker can perform stronger attacks if it knows the design parameters ($\hat{\tau}$ and $\hat{\gamma}$) of the algorithm. Thus, we also computed the attacker's estimation error by assuming it knows the parameters used in the algorithm ($\hat{\tau} = 0.02$ and $\hat{\gamma} = 0.03)$). Estimation error of the attacker for different values of $\tau$ and $\gamma$ are shown in Tables~\ref{table:tauEffectAttacker} and~\ref{table:gammaEffectAttacker}, respectively. When we increased $\tau$ up to 0.1, we observed a slight decrease in the estimation error. For instance, when $\tau = 0.1$ and $\hat{\tau} = 0.02$, we observed the estimation error of the attacker as $0.42$. Similarly, the attacker can decrease the estimation error to $0.434$ by knowing the value of $\hat{\gamma}$ and selecting $\gamma = 0.01$. We also observed that for $\tau$ values greater than $0.1$ and $\gamma$ values less than 0.01, the decrease in attacker's estimation error converged. Overall, we conclude that the attacker can slightly reduce its estimation error by knowing the design parameters of the proposed mechanism, however, the gain of the attacker (in terms of reduced estimation error) is negligible (at most 0.07). Furthermore, the proposed scheme still preserves its advantage over the original RR mechanism in all considered scenarios. These results show that varying design parameters only slightly affect the performance of the proposed scheme.

In our experiments, we assume that the attacker has the same background knowledge (i.e., correlations between SNPs) as the data owner. If the attacker's knowledge is weaker than this assumption (e.g., if the computed correlations on the attacker's side are not accurate), then its attack will be less successful and its estimation error will be higher than the one we computed in our experiments. On the other hand, if the attacker's knowledge about the correlations in the data is stronger than the data owner, it can perform more successful attacks. To validate this, we added noise to the correlations computed by the data owner and observed that the attacker obtains a lower estimation error than the one in our experiments. In the worst case scenario, when the data owner does not know (or use) the correlations in the data, the estimation error of the attacker becomes equal to its estimation error when it performs the attack to the original RR mechanism (i.e., solid line marked with triangles in Figure \ref{fig:ldpComp}).

\begin{table}
\small
\centering
\caption{Privacy (in terms of estimation error) of the proposed scheme for different values of $\tau$ (correlation threshold of the attacker) when $\gamma = 0.03$ and $\epsilon = 1$. The parameters used in the data sharing algorithm are $\hat{\tau} = 0.02$ and $\hat{\gamma} = 0.03$.}
\begin{tabular}{ c|c|c|c|c|c }
\hline
$\tau$ & 0.02 & 0.04 & 0.06 & 0.08 & 0.10 \\
\hline
Estimation Error ($E$) & 0.483 & 0.478 & 0.462 & 0.446 & 0.420\\
\hline
\end{tabular}
\label{table:tauEffectAttacker}
\end{table}

\begin{table}
\small
\centering
\caption{Privacy (in terms of estimation error) of the proposed scheme for different values of $\gamma$ (inconsistency threshold of the attacker) when $\tau = 0.02$ and $\epsilon = 1$. The parameters used in the data sharing algorithm are $\hat{\tau} = 0.02$ and $\hat{\gamma} = 0.03$.}
\begin{tabular}{ c|c|c|c|c|c }
\hline
$\gamma$ & 0.01 & 0.02 & 0.03 & 0.04 & 0.05 \\
\hline
Estimation Error ($E$) & 0.434 & 0.468 & 0.483 & 0.497 & 0.508\\
\hline
\end{tabular}
\label{table:gammaEffectAttacker}
\end{table}

\subsection{The Effect of the Processing Order on Utility}
\label{subsec:orderEval}
In this section, we show the effect of different order of processing on the utility of the beacon responses. For all experiments, we set the parameters the same as in Section~\ref{subsec:compareLDP} (i.e., $\hat{\tau}=0.02$ and $\hat{\gamma}=0.03$) and we also quantified the accuracy in terms of the fraction of correct beacon responses for a population. We reported the results averaged over 100 trials.

To demonstrate that the greedy order of processing (in Section~\ref{sec:heuristic}) outperforms the random order 
and provides an accuracy that is close to the optimal order (in Algorithm~\ref{Value_Iteration}), we first compared them using a small dataset of 10 SNPs of 10 individuals (obtained from the same HapMap dataset~\cite{international2003international} introduced before).
When processing the SNPs of an individual $I_j$ using the random order, we randomly permuted the order of her SNP sequence and then fed it into Algorithm~\ref{alg:alg1}. 
Assuming each donor has the same privacy budget ($\epsilon$) and varying the privacy budget from $0.2$ to $2$, we show the results in
Figure~\ref{fig:utility_vs_e_small}. 
We observed that for all the privacy budgets, the accuracy obtained by the greedy order is close to that obtained by the optimal order (when $\epsilon\geq 1$, the accuracy provided by both orders differ only by less than $2\%$). Whereas, the accuracy achieved by the random order is the lowest for all the privacy budgets because the random order does not try to maximize individuals' expected beacon utility. These results show that greedy order of processing (in Section~\ref{sec:heuristic}) performs comparable to the optimal algorithm (in Algorithm~\ref{Value_Iteration}), and hence we use the greedy algorithm for our evaluations with larger datasets. 
\begin{figure}
    \centering
    \includegraphics[width=8cm,keepaspectratio]{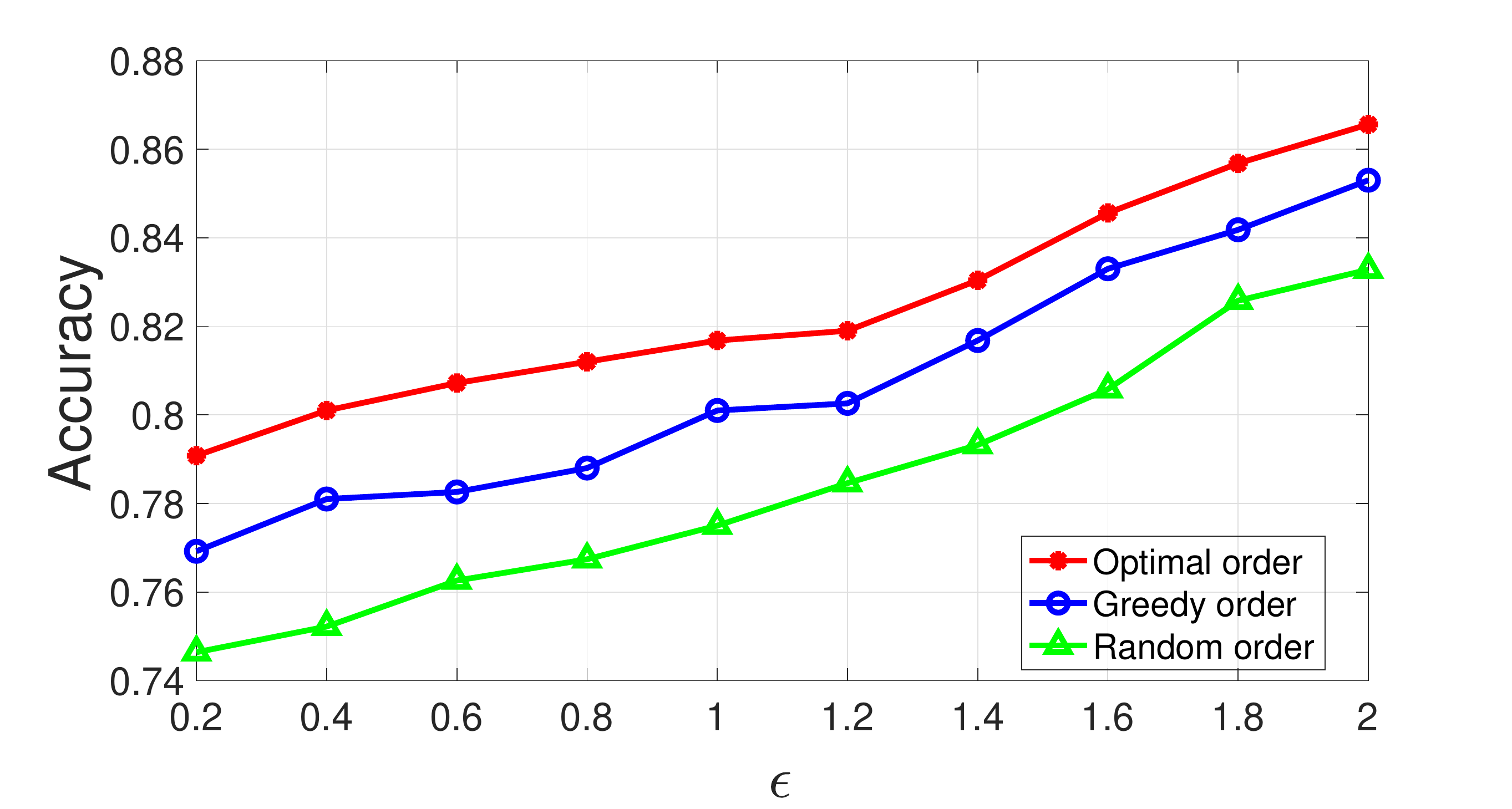}
    \caption{Accuracy of beacon responses on 10 SNPs from 10 individuals using optimal (Algorithm~\ref{Value_Iteration}), greedy (Section~\ref{sec:heuristic}), and random orders of processing.} 
    \label{fig:utility_vs_e_small}
\end{figure}

Next, we compared the accuracy achieved by the greedy and random orders on the original dataset (i.e., 1000 SNPs of 156 individuals). 
The experiment results are shown in Figure~\ref{fig:utility_vs_e_large}. We observed that compared to the small dataset, the accuracy is improved significantly. For example, even under very limited privacy budgets (e.g., $\epsilon\leq 0.4$), both orders can achieve an accuracy over $93\%$ since large dataset contains stronger (and more) correlations among SNPs. Correlations in the data is critical for the utility of the proposed data sharing mechanism, since when data is correlated, the proposed algorithm eliminates statistically unlikely states and adjusts the probability distributions of the remaining states in such a way that deviating highly from the ``useful values" of the shared SNPs is small (as discussed in Section~\ref{subsec:improveUtility}). 
From Figure~\ref{fig:utility_vs_e_large}, we also observed that the accuracy achieved by the greedy order consistently outperforms that obtained by the random order. This suggests that the utility varies under different processing orders and we can improve the utility of shared data points (SNPs) in a strategic way (e.g., by selecting them in a greedy manner). This outcome can also be generalized when sharing other types of correlated data. Another advantage of determining the processing order using the greedy algorithm is its computational complexity ($\mathcal{O}(l^2)$, where $l$ is the number of shared SNPs of a donor), whereas the computational complexity of the optimal algorithm (in Algorithm~\ref{Value_Iteration}) is $\mathcal{O}(3^l)$. 
\begin{figure}
    \centering
    \includegraphics[width=8cm,keepaspectratio]{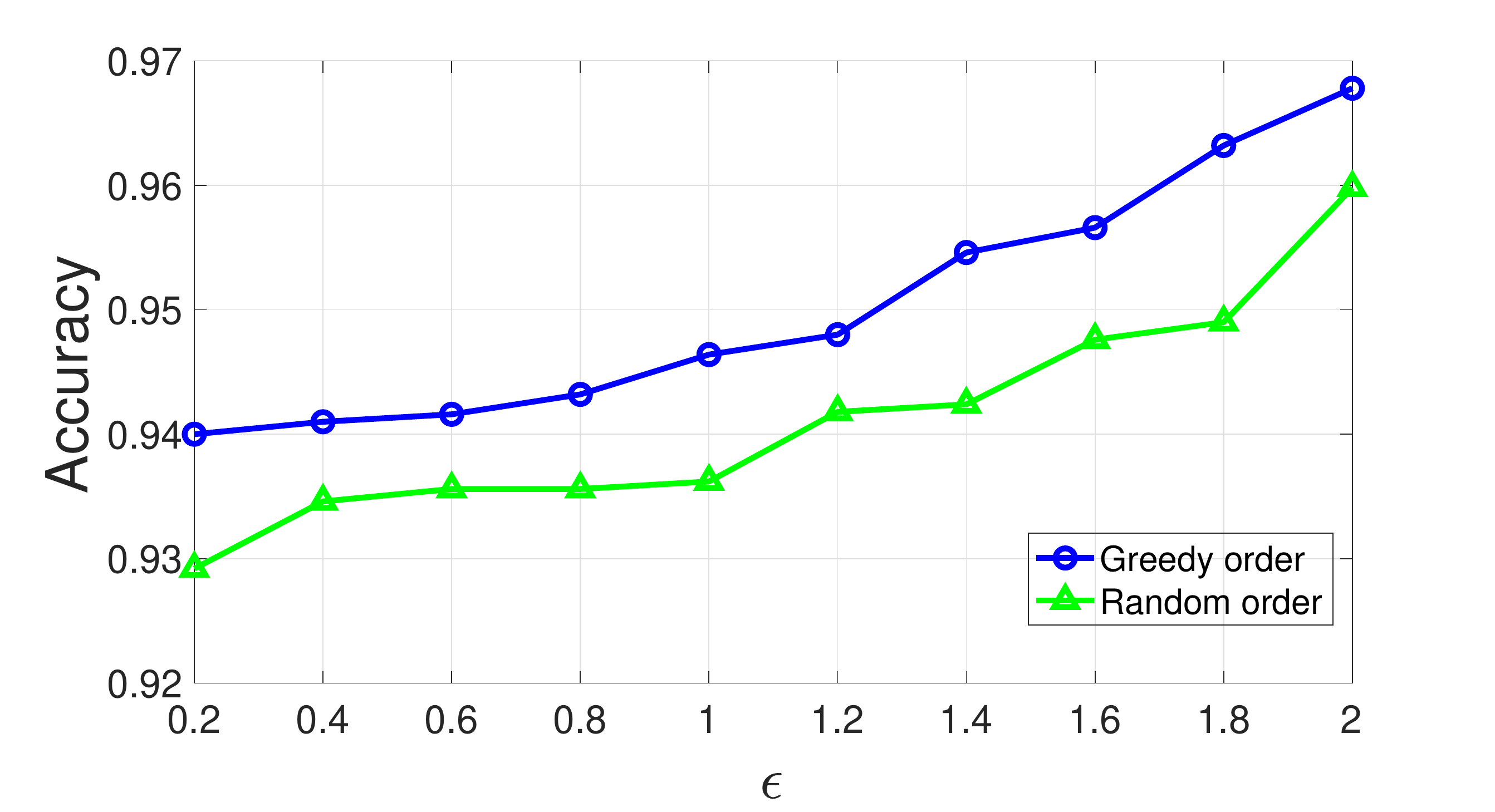}
    \caption{Accuracy of beacon responses on the original dataset using greedy and random orders of processing.} 
    \label{fig:utility_vs_e_large}
\end{figure}

\section{Discussion}
\label{sec:disc}
In this section, we discuss how to consider kinship in data sharing and how attacker's background knowledge affects the privacy guarantees.
\subsection{Selection of Privacy Budget by Considering Kinship}
\label{sec:discussion}

Due to rules of inheritance (i.e., Mendel's law), sharing the value of a SNP also (indirectly) reveals information about genomes of donor's family members and this may help an attacker to improve its inference about some SNPs of the family members. Let a genome donor $I_j$ share her $SNP_i$ ($x_i^j$) as $y_i^j$ and $I_f$ be one of her family members. Here, we discuss the information gain of an attacker about $SNP_i$ of the family member $I_f$ ($x_i^f$) by receiving $y_i^j$ in terms of the privacy budget of $I_j$. We assume that all family members have their own privacy budgets (e.g., $\epsilon_j$ being the privacy budget for $I_j$) that they do not want to violate. An attacker can gain information about genomes of individuals by using the shared SNPs of their family members. Hence, we first compute the indirect privacy budget of a family member $I_f$ when $I_j$ shares her $SNP_i$. Then, we propose an algorithm to compute the maximum privacy budget of a genome donor to preserve the privacy of her family members by considering their privacy budgets and the previously shared SNPs.

\subsubsection*{Computing the Attacker's Information Gain}

Since the proposed data sharing mechanism assigns different probability of sharing for a given SNP under different scenarios (as discussed in Section~\ref{subsec:improveUtility}), we discuss the privacy of family members by assuming all three states of the SNP are possible (i.e., no state is eliminated due to correlations) for the sake of generality. The computations in this part can be done similarly for other scenarios (e.g., the ones having less possible states) as well using the probability distributions in Figure~\ref{fig:prob2}. In the scenario with three possible states, when the donor $I_j$ shares $0$ as the value of $SNP_i$, the original value of the SNP is $0$ with probability $p$ ($\Pr(x_i^j = 0~|~y_i^j = 0) = p$). Also, both $1$ or $2$ can also be the original value of that SNP with probability $q$ ($\Pr(x_i^j = 0~|~y_i^j = 1) = \Pr(x_i^j = 0~|~y_i^j = 2) = q$). As mentioned before, to achieve $(\epsilon,T)$-dependent LDP, $p$ and $q$ are selected as $e^\epsilon_j / (e^\epsilon_j + 2)$ and $1 / (e^\epsilon_j + 2)$, respectively. Therefore, using the Mendel's law, the attacker can compute $\Pr(x_i^f = d_{\alpha}~|~y_i^j = d_{\beta})$ for $d_{\alpha}, d_{\beta} \in \{0,1,2\}$. 
For instance, $\Pr(x_i^f = 0~|~y_i^j = 0)$ can be computed by the attacker as $\sum_{m \in \{0,1,2\}} \Pr(x_i^f = 0~|~x_i^j = m) \Pr(x_i^j = m~|~y_i^j = 0)$. This sum is also equal to: $\Pr(x_i^f = 0~|~x_i^j = 0)p + \Pr(x_i^f = 0~|~x_i^j = 1)q + \Pr(x_i^f = 0~|~x_i^j = 2)q$. Similarly, the attacker can also compute $\Pr(x_i^f = d_{\alpha}~|~y_i^j = d_{\beta})$ for other $d_{\alpha}$ and $d_{\beta}$ values. Hence, based on the conditional probabilities $\Pr(x_i^f = d_{\alpha}~|~x_i^j = d_{\beta})$ (computed using Mendel's law), the attacker can gain a certain amount of information about each family member of the donor (the amount of information depends of the kinship relationship between the donor and the corresponding family member). Moreover, the attacker can gain more information about a victim if more than one of victim's family members share their values for the same SNP. 

As discussed in Section~\ref{subsec:genomics}, each SNP of a child inherits one allele from the mother and one allele from the father. This means, the attacker can gain the most information from the first degree family members. For instance, if a child has $0$ as the value of $SNP_i$, the attacker can (using Mendel's law) infer that neither of her parents can have $2$ as the value of $SNP_i$. In Appendix~\ref{app:kinship1}, we analyze the privacy loss of a victim when one of her/his first degree relatives share her $SNP_i$ using the proposed method. We also extend the analysis and consider a case, in which two children of a victim (parent) share their SNPs under $(\epsilon,T)$-dependent LDP in Appendix~\ref{app:kinship2}. 

\subsubsection*{Determining the Maximum Privacy Budget}
\label{sec:maxprivacy}

$\epsilon$ value of each family member who wants to share her SNPs can be computed similarly by considering the privacy budgets of the family members and the family members who previously shared their SNPs under $(\epsilon,T)$-dependent LDP. Let all members in a family denoted as $F_1$, \ldots, $F_m$ and assume $n$ of them ($F_1$, \ldots,$F_n$) shared their SNPs previously. A basic algorithm is given in Algorithm~\ref{epsilonAlgorithm} to compute the maximum privacy budget $\epsilon_{max_i}$ that can be used by a family member $F_{n+1}$ who wants to share her $SNP_i$. For each family member $F_s$ who did not share $SNP_i$, the algorithm computes maximum privacy budget that can be used by $F_{n+1}$ to preserve privacy of $F_s$ (by computing $\Pr(x_{i}^{F_s} = b~|~y_i^{F_1} \wedge ... \wedge y_i^{F_n} \wedge (y_i^{F_{n+1}} = a))$ using the privacy budgets of the family members who previously shared their data ($\epsilon_{F_1}$, \ldots, $\epsilon_{F_n}$) and solving the equation in Line 9). Then, minimum of these values is returned as the maximum privacy budget $\epsilon_{max_i}$ for sharing $SNP_i$. Assuming $F_{n+1}$ wants to share all her $l$ SNPs, she can run Algorithm~\ref{epsilonAlgorithm} for all of her SNPs ($SNP_1$,...,$SNP_l$) and obtain $\epsilon_{max_1}$,...,$\epsilon_{max_l}$. At the end, $F_{n+1}$ selects the minimum of these $\epsilon$ values as her privacy budget ($\epsilon_{max} \leftarrow min(\epsilon_{max_1},...,\epsilon_{max_l})$). Then, if $\epsilon_{max} \geq \epsilon_{F_{n+1}}$, $F_{n+1}$ uses $\epsilon_{F_{n+1}}$ as her maximum privacy budget. However, if $\epsilon_{max} < \epsilon_{F_{n+1}}$, $F_{n+1}$ should use $\epsilon_{max}$ in the proposed data sharing algorithm to preserve the privacy of the other family members while sharing her data.

\begin{algorithm}
\footnotesize
\SetKwInOut{Input}{input}
\SetKwInOut{Output}{output}
\Input{Privacy budgets of all family members, previously shared data by $n$ family members $y_i^{F_1}$, $y_i^{F_2}$,...., $y_i^{F_n}$, privacy budgets in previous sharings $\epsilon_{F_1}$, $\epsilon_{F_2}$, ..., $\epsilon_{F_n}$}
\Output{Maximum privacy budget $\epsilon_{max_i}$ to preserve the privacy of family members}
$\epsilon_{max_i} \leftarrow \infty$;
\ForAll{family member $F_s$ who did not share her $SNP_i$ }
{
    $e \leftarrow [0,0,0]$; 
    
    \ForAll{$a \in \{0,1,2\}$}
    {
        $z \leftarrow [0,0,0]$;
        
        \ForAll{$b \in \{0,1,2\}$}
        {
        $z[b] \leftarrow \Pr(x_{i}^{F_s} = b~|~y_i^{F_1} \wedge ... \wedge y_i^{F_n} \wedge (y_i^{F_{n+1}} = a))$;
        }
        $e[a] \leftarrow \textrm{solve}~ (\frac{max(z)}{ min(z)} = e^{\epsilon_{F_s}}) $;
    }
    
    \If{$max(e) < \epsilon_{max_i}$} { $\epsilon_{max_i} \longleftarrow max(e)$;
	     	}
}
	\caption{Determining the maximum privacy budget $\epsilon_{max_i}$ of $F_{n+1}$ for sharing $SNP_i$ to preserve the privacy of family members.}

\label{epsilonAlgorithm}
\end{algorithm}

Note that in Algorithm~\ref{epsilonAlgorithm}, we only consider the privacy budgets of the family members who did not share their SNPs. Since $F_1$, \ldots, $F_n$ shared their SNPs before $F_{n+1}$, we assume that they already used all or most of their privacy budgets. Hence, their privacy budgets will probably be exceeded with another sharing. If we also consider the privacy budgets of these $n$ family members (who shared their genomes before $F_{n+1}$), the algorithm will most likely return $0$ as $\epsilon_{max_{F_{n+1}}}$. Thus, we can assume that by sharing data, an individual accepts an indirect increase in her privacy budget when other family members share their data in the future. Furthermore, practical selection of the privacy parameter ($\epsilon$) by a donor and coordination among family members during data sharing require further investigation and they are beyond the scope of this paper.

\subsection{Privacy Guarantees Against Attacks At Large }
\label{sec:leakage}
We have demonstrated that the proposed mechanism can achieve the $(\epsilon,T)$-dependent LDP for each SNP when an attacker conducts a correlation attack by checking the publicly available pairwise conditional probabilities of the SNPs. However, in practice, different attackers with various prior knowledge may conduct more sophisticated attacks using advanced machine learning techniques (e.g., using Bayesian inference~\cite{yang2015bayesian,liu2016dependence}). Here, we will discuss the privacy guarantees of the proposed mechanism on attacks at large.     

Considering the case, in which an individual $I_j$ shares her $l$ SNPs in $\mathcal{X}^j$, we characterize an arbitrary attacker ($\mathbb{A}$) by its prior knowledge tuple ($\mathbb{A}_k$), i.e., $\mathbb{A}_k = \{\mathcal{R},PC_{\mathcal{Q}}\}$, 
where $\mathcal{R}$ is the set of SNPs whose true states are already known by the attacker ($\mathcal{R}\subset \mathcal{X}^j, |\mathcal{R}|< l$), and  $PC_{\mathcal{Q}}$ is the pairwise conditional probabilities of the set of SNPs in $\mathcal{X}^j$ that are known by the attacker ($\mathcal{Q}\subset \mathcal{X}^j, |\mathcal{Q}|\leq l$). In Section~\ref{subsec:attack}, the considered attacker is characterized as
$\mathbb{A}_k=\{\emptyset, PC_{\mathcal{X}^j}\}$. 
Here, we investigate the impact of more general attackers on the privacy leakage of a single SNP.

We define the privacy leakage of a SNP $x_i\in \mathcal{X}^j$ caused by the attacker with prior knowledge $\mathbb{A}_k$ as $\text{PrvcLkg}_{x_i}^{\mathbb{A}_k} = \sup_{\alpha\in\{0,1,2\}}\Pr(x_i = \alpha|\mathcal{Y}^j,\mathbb{A}_k)$, where $\mathcal{Y}^j$ is the released SNPs via Algorithm~\ref{alg:alg1}. $\text{PrvcLkg}_{x_i}^{\mathbb{A}_k}$ covers a wide-range of attacks using machine learning techniques, because most of the learning algorithms give outputs in the form of  posterior probabilities, such as Bayesian inference and deep learning. Then, we have
$\text{PrvcLkg}_{x_i}^{\mathbb{A}_k} = \sup_{\alpha\in\{0,1,2\}}\Pr(x_i = \alpha|\mathcal{Y}^j, \mathbb{A}_k)\\
 = \sup_{\alpha\in\{0,1,2\}} \frac{  \Pr(\mathcal{Y}^j| x_i = \alpha  ,\mathbb{A}_k)  \Pr(x_i = \alpha  ,\mathbb{A}_k) }{\Pr(\mathcal{Y}^j, \mathbb{A}_k)}\\
 = \sup_{\alpha\in\{0,1,2\}} \frac{\Pr(\mathcal{Y}^j| x_i = \alpha  ,\mathbb{A}_k) }{\Pr(\mathcal{Y}^j|\mathbb{A}_k)}   \Pr(x_i = \alpha  |\mathbb{A}_k) \\
 = \sup_{\alpha\in\{0,1,2\}} \boxed{\frac{\Pr(\mathcal{Y}^j| x_i = \alpha  ,\mathbb{A}_k) }{\Pr(\mathcal{Y}^j|     x_i = \beta,    \mathbb{A}_k)}   }   
\frac{\Pr(\mathcal{Y}^j|     x_i = \beta,    \mathbb{A}_k)}{\Pr(\mathcal{Y}^j|     \mathbb{A}_k)}  \Pr(x_i = a  |\mathbb{A}_k)\\
\stackrel{(*)}= \sup_{\alpha\in\{0,1,2\}}  \boxed{e^\epsilon}     \frac{  \Pr(\mathcal{Y}^j ,    x_i = \beta,    \mathbb{A}_k)    \Pr(\mathbb{A}_k)     }{ \Pr(x_i = \beta,    \mathbb{A}_k) \Pr(\mathcal{Y}^j ,    \mathbb{A}_k)      }  \Pr(x_i = a  |\mathbb{A}_k)\\
 =  \sup_{\alpha\in\{0,1,2\}}  e^\epsilon \boxed{\Pr(x_i = \beta|\mathcal{Y}^j, \mathbb{A}_k)} \frac{\Pr(x_i = a  |\mathbb{A}_k)}{\Pr(x_i = \beta|    \mathbb{A}_k)}      \\
 \stackrel{(\#)}  \leq  e^\epsilon \boxed{(1-\text{PrvcLkg}_{x_i}^{\mathbb{A}_k})}\zeta$,\\
 where $(*)$ follows from the definition of $(\epsilon,T)$-dependent LDP, and $\epsilon$ is determined after eliminating the statistically unlike states of $x_i$, and in $(\#)$, $\zeta = \frac{\Pr(x_i = \alpha  |\mathbb{A}_k)}{\Pr(x_i = \beta|\mathbb{A}_k)}$ is the ratio between prior probabilities, which is a prior knowledge-specific parameter.    $\text{PrvcLkg}_{x_i}^{\mathbb{A}_k} \leq e^\epsilon (1-\text{PrvcLkg}_{x_i}^{\mathbb{A}_k})\zeta$ can further be simplified as $\text{PrvcLkg}_{x_i}^{\mathbb{A}_k} \leq   \max\{\frac{1}{\zeta e^\epsilon +1},   \frac{\zeta e^\epsilon}{\zeta e^\epsilon +1}   \}$. As a result, we provide an upper bound for the privacy leakage of an arbitrary SNP caused by an arbitrary attacker, who has a specific prior knowledge. This suggests that an attacker can only have impact on the privacy leakage via the ratio between prior probabilities, i.e., $\zeta$, which, however, is independent of his adopted inference technique (attacking scheme). Consequently, under general attacks, the proposed mechanism guarantees that the probability of correctly inferring the state of an arbitrary unknown SNP  will not exceed $\max\{\frac{1}{\zeta e^\epsilon +1},   \frac{\zeta e^\epsilon}{\zeta e^\epsilon +1}   \}$. 
\section{Conclusion}
\label{sec:conclusion}
In this paper, we have introduced $(\epsilon,T)$-dependent LDP and proposed a data sharing scheme for genomic data sharing achieving $(\epsilon,T)$-dependent LDP. We have first described a correlation attack to show \color{black} that directly applying the randomized response mechanism to correlated data causes vulnerabilities. To improve privacy against the correlation attacks, we have proposed a scheme that eliminates certain states of a SNP (and does not use such states during data sharing) which are loosely correlated with the previously shared SNPs. The proposed scheme decides a value to share among the non-eliminated states by providing formal privacy guarantees. To improve the utility of the shared data, we have shown how to adjust probability distributions for the non-eliminated states of the SNPs while still guaranteeing $(\epsilon,T)$-dependent LDP. We have also proposed an optimal algorithm and a greedy algorithm to determine the processing order of SNPs in the proposed data sharing algorithm to optimize utility. Furthermore, we have discussed the effect of genomic data sharing on family members and proposed an algorithm to decide the privacy budget of a genome donor by considering the privacy preferences of her family members. We have implemented the proposed scheme and evaluated its privacy and utility via experiments on a real-life genomic dataset. The proposed data sharing mechanism can also be utilized for sharing of similar sensitive information that includes correlations (e.g., location patterns). In future work, we will evaluate the proposed mechanism considering different application of the data collector. We will also study how to compute the data sharing probabilities for different values of the SNP as a donor shares data with more data collectors.

\bibliographystyle{ACM-Reference-Format}
\bibliography{sample-base}

\appendix

\vspace{5mm}
\noindent{\Large{\textbf{Appendices}}}
\normalsize

\section{Flowchart of Probability Distribution Scenarios for SNP Selection}
\label{app:tree}

We give the flowchart of the selection of next SNP to be processed by the data sharing algorithm in Figure \ref{fig:treei}.

\begin{figure*}[h]
    \centering
    \includegraphics[width=1\textwidth]{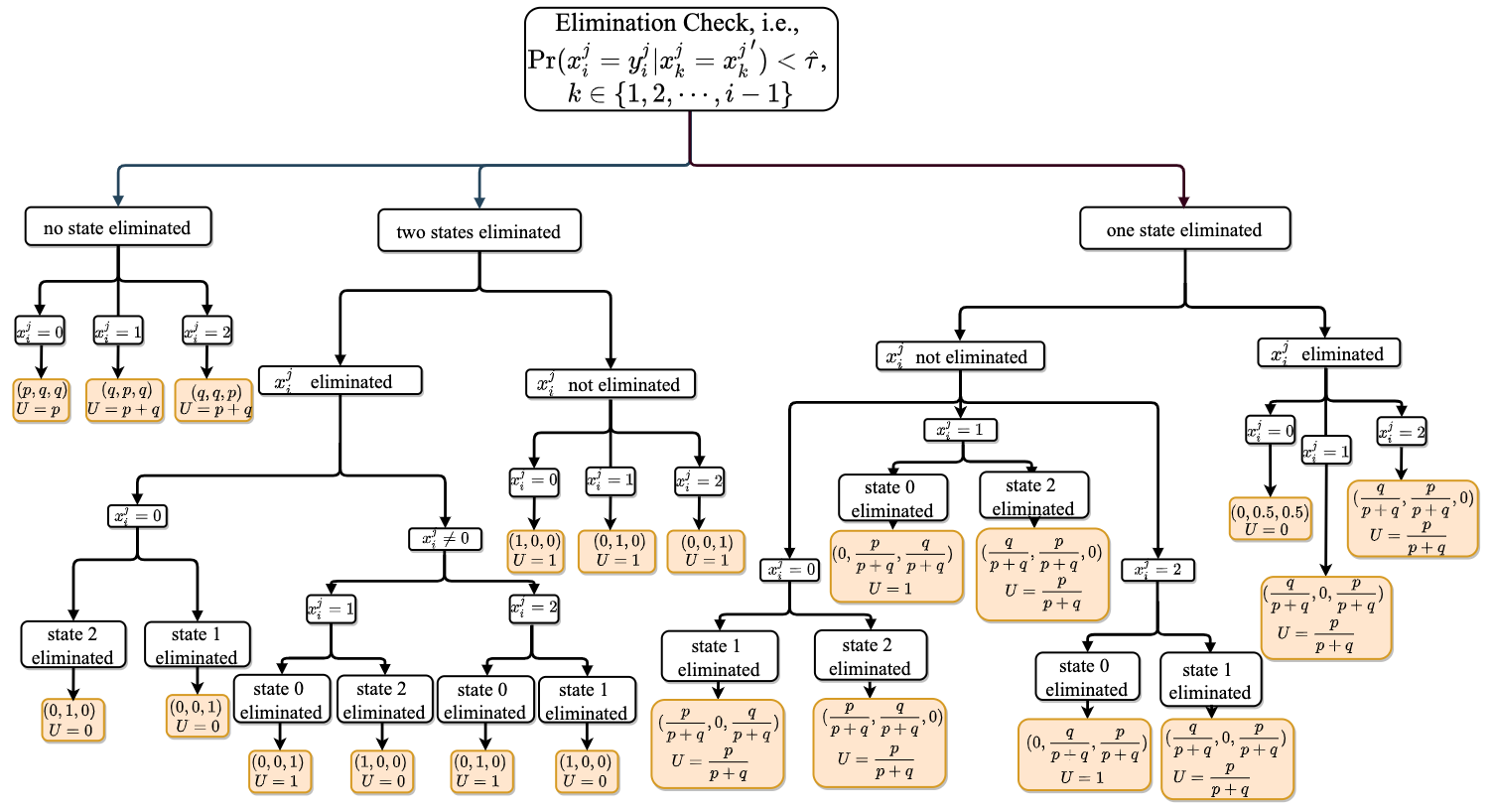}
    \caption{Tree structured flowchart when individual $I_j$'s SNP $x_i^j$ is selected in Algorithm~\ref{alg:alg1}. Specifically, the root node calculates all possible probabilities, i.e., line 5-6 in Algorithm~\ref{alg:alg1}, the other non-leaf nodes conduct state eliminations, i.e., line 14-19 in Algorithm~\ref{alg:alg1}, and the leaf nodes (in yellow) share the non-eliminated states, i.e., line 20 in Algorithm~\ref{alg:alg1}. For each $x_i^j$, only one leaf will be activated. The tuple in a leaf node is the probability distribution of sharing different states, i.e., Figure~\ref{fig:prob2}, and $U$ represents the immediate expected value of the utility of that leaf being activated.}
    \label{fig:treei}
\end{figure*}

\section{Proof of Lemma 5.1}
\label{app:proof}

\begin{proof}[Proof of Lemma \ref{individual_utility}]
Considering the SNP $i$ in a population of $n$ individuals, i.e., $\{I_1,I_2,\cdots,I_n\}$, we denote the original and shared SNP $i$ for the individuals as $\{x_i^1,x_i^2,\cdots,x_i^n\}$ and $\{y_i^1,y_i^2,\cdots,y_i^n\}$, respectively. Then, the expected beacon utility of the population is $\mathbb{U}_{\text{pop}} = 1\times\Pr(\sum_jy_i^j=0|\sum_jx_i^j=0)+1\times\Pr(\sum_jy_i^j\neq0|\sum_jx_i^j\neq0),  j\in\{1,2,\cdots,n\}$, and the expected beacon utility of an individual $I_j$ is $\mathbb{U}_{I_j} = 1\times\Pr(y_i^j=0|x_i^j=0)+1\times\Pr(y_i^j\neq0|x_i^j\neq0)$.

Since $x_i^j,y_i^j\in\{0,1,2\}$, for the first part of $\mathbb{U}_{\text{pop}}$, we have $\Pr(\sum_jy_i^j=0|\sum_jx_i^j=0)\leq\sum_j \Pr(y_i^j=0|x_i^j=0)$. For the second part, we have
\begin{equation*}
\begin{aligned}
&\Pr(\sum_jy_i^j\neq0|\sum_jx_i^j\neq0)  \\
=& \Pr(\exists y_i^j\neq 0|\exists x_i^k\neq 0) \qquad\Big( j,k\in\{1,2,\cdots,n\}\Big)\\
\leq &\sum_{j,k} \Pr( y_i^j\neq 0| x_i^k\neq 0) \qquad\Big( j,k\in\{1,2,\cdots,n\}\Big)\\
\stackrel{*}=& \sum_{j} \Pr( y_i^j\neq 0| x_i^j\neq 0),
\end{aligned}
\end{equation*}
where $*$ is due to the fact that the individuals do not have access to others' SNPs and they share SNPs independently from each other. Thus, we have
$\mathbb{U}_{\text{pop}} \leq \sum_j \Big(\Pr(y_i^j=0|x_i^j=0)+\Pr(y_i^j\neq0|x_i^j\neq0)\Big) = \sum_j\mathbb{U}_{I_j}$.
Let $\mathbb{U}_{I_j}^*$ be the maximum of the expectation of individual $I_j$'s beacon utility. Then, we can obtain the maximum of the expectation of a population's beacon utility as $\mathbb{U}_{\text{pop}}^* =  \max\{\sum_j \mathbb{U}_{I_j}^*,1\}$, which suggests that the sufficiency holds.
\end{proof}

\section{The Accuracy of Beacon Responses for Both Yes and No Answers}
\label{app:accuracy}

In Figure~\ref{fig:beaconYesNo} (Section \ref{subsec:compareLDP}), we show the utility of the proposed mechanism in terms of the accuracy of the beacon for 1000 queries. Here, we give the accuracy for both ``yes'' and ``no'' separately. In beacon responses, it is more difficult to preserve the accuracy after perturbation when the original response of a query is ``no'', because to provide an accurate response, all collected values of a SNP should still be $0$ after perturbation. When there is at least one other value (1 or 2), the response changes and becomes incorrect. The proposed scheme significantly outperforms the original RR mechanism for the accuracy of ``no'' responses. Since the minor allele is observed rarely in the population, most of the individuals have $0$ as their SNP values (i.e., most of their SNPs contain no minor alleles). Also, since the proposed scheme considers correlations, it mostly eliminates states $1$ and/or $2$ for such SNPs due to their low frequency in the population. Hence, when there is no minor alleles in the individuals, with high probability, all individuals report $0$ for that SNP, which preserves the utility of the beacon responses. Our results are shown in Figure~\ref{fig:beaconNo} (for the accuracy of only ``no'' responses) and Figure~\ref{fig:beaconYes} (for the accuracy of only ``yes'' responses). For $\epsilon \leq 1$, the proposed scheme also provides better accuracy than the original RR mechanism when the original response of a query is ``yes''. When $\epsilon > 1$, the accuracy of `yes'' responses is similar for both mechanisms.

\begin{figure}[h]
\centering
\includegraphics[width=8cm,keepaspectratio]{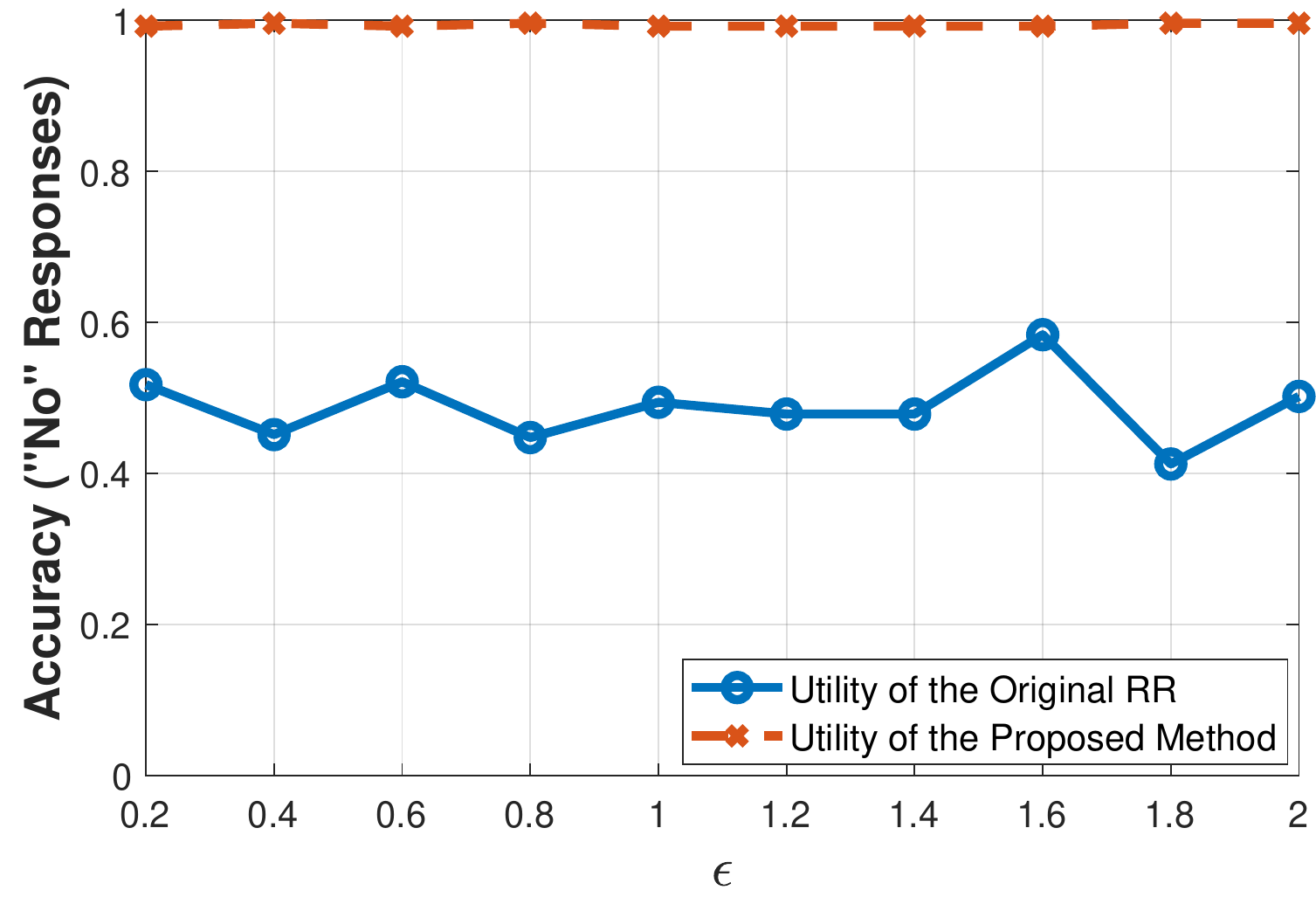}
\caption{Comparison of the proposed method with original randomized response mechanism in terms of utility. Utility is measured as the accuracy of ``no'' responses provided from a genomic data sharing beacon.}
\label{fig:beaconNo}
\end{figure}

\begin{figure}[h]
\centering
\includegraphics[width=8cm,keepaspectratio]{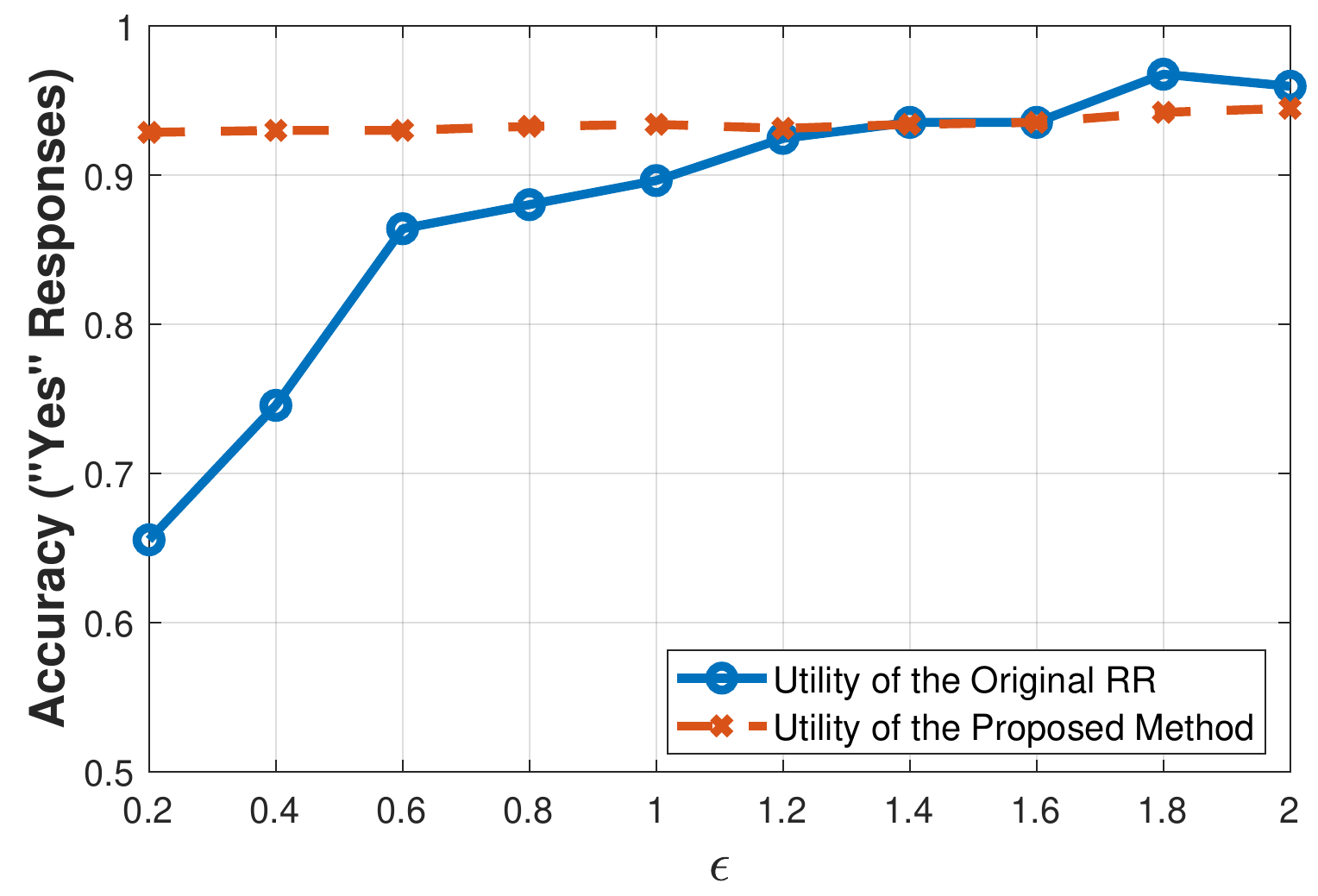}
\caption{Comparison of the proposed method with original randomized response mechanism in terms of utility. Utility is measured as the accuracy of ``yes'' responses provided from a genomic data sharing beacon.}
\label{fig:beaconYes}
\end{figure}

\section{Comparing Utility with the Original RR Mechanism after Post-Processing}
\label{app:postprocess}

In Table \ref{table:postprocess}, we give the utility comparison of the proposed mechanism with RR mechanism when post-processing is applied. 

\begin{table}
\small
\centering
\caption{Comparison of the proposed method with original randomized response mechanism (with and without post-processing) in terms of utility. Utility is measured as the accuracy of responses provided from a genomic data sharing beacon.}
\begin{tabular}{ c c c c c c }
\hline
& \multicolumn{5}{c}{$\epsilon$} \\
\hline
 & 0.4 & 0.8 & 1.2 & 1.6 & 2 \\
\hline
RR without post-processing & 0.697 & 0.791 & 0.808 & 0.831 & 0.849 \\
\hline
RR with post-processing & 0.667 & 0.715 & 0.734 & 0.765 & 0.788 \\
\hline
Proposed method & 0.934 & 0.941 & 0.945 & 0.952 & 0.961 \\
\hline
\end{tabular}
\label{table:postprocess}
\end{table}

\section{The Effect of Sharing a SNP by Children on a Parent}
\label{app:kinship}

In Section \ref{sec:discussion}, we discuss how data sharing affects the privacy of family members. Here, we analyze how the privacy of a parent is affected when one or more of his/her children share their SNPs.

\subsection{Sharing a SNP by One Child}
\label{app:kinship1}
In this section, we first analyze the privacy loss of a victim when one of her/his first degree relatives share her $SNP_i$ under $(\epsilon,T)$-dependent LDP. For this, we consider a parent as the victim (assuming that victim did not share genomic data before) and his/her child as the donor that shares her genome.

Let $I_f$ be a parent of the donor $I_j$. In Table~\ref{table:mendel}, we show the probability of having each SNP value for the parent when the child's SNP value is known. For instance, the attacker can compute the probability of the parent $I_f$ having each possible value for $SNP_i$ when that SNP is shared as $0$ by his/her child $I_j$ under $(\epsilon,T)$-dependent LDP as follows:
\begin{displaymath}
\Pr(x_{i}^f = 0~|~y_i^j = 0) = \frac{2}{3}p + \frac{1}{3}q + 0q
\end{displaymath}
\begin{displaymath}
\Pr(x_{i}^f = 1~|~y_i^j = 0) = \frac{1}{3}p + \frac{1}{3}q + \frac{1}{3}q
\end{displaymath}
\begin{displaymath}
\Pr(x_{i}^f = 2~|~y_i^j = 0) = 0p + \frac{1}{3}q + \frac{2}{3}q
\end{displaymath}

To quantify the information gain of the attacker about the parent $I_f$, we compute the indirect privacy budget ($\hat{\epsilon_{f}}$) of the parent $I_f$ as a result of data sharing by the child. Let the child shares her $SNP_i$ with her privacy budget $\epsilon_j$. From the definition of LDP, $p \geq q$ since $\epsilon_j \geq 0$, and the equality holds when $\epsilon_j = 0$. Hence, $\Pr(x_{i}^f = 0~|~y_i^j = 0) \geq \Pr(x_{i}^f = 1~|~y_i^j = 0) \geq \Pr(x_{i}^f = 2~|~y_i^j = 0)$. To satisfy $(\epsilon,T)$-dependent LDP, $\frac{\Pr(x_{i}^f = 0~|~y_i^j = 0)}{\Pr(x_{i}^f = 0~|~y_i^j = 2)} \leq e^{\hat{\epsilon_{f}}}$. Using the equations above, we obtain $\frac{(2p + q)/3}{q} \leq e^{\hat{\epsilon_{f}}}$. Therefore, the indirect privacy budget of the parent $\hat{\epsilon_{f}} = \ln{(\frac{2e^{\epsilon_j} + 1}{3})}$ when the child shares $0$ as the value of one of her SNPs (i.e., $y_i^j = 0$). Since $e^{\epsilon_j} \geq 1$ we obtain $\hat{\epsilon_{f}} \leq \epsilon_j$ and equality holds when $\epsilon_j = 0$. Due to symmetric probability distributions, we also obtain the same $\hat{\epsilon_{f}}$ when $y_i^j = 2$. 
When the child shares $1$ as the value of $SNP_i$ under $(\epsilon,T)$-dependent LDP ($y_i^j = 1$), all $\Pr(x_{i}^f = 0~|~y_i^j = 1)$, $\Pr(x_{i}^f = 1~|~y_i^j = 1)$, and $\Pr(x_{i}^f = 2~|~y_i^j = 1)$ values are computed as $\frac{1}{3}p + \frac{2}{3}q$. Hence, we calculate the indirect privacy budget of the parents as $\hat{\epsilon_{f}} = 0$ in that case, which means that the attacker does not gain any information (about the parent) by receiving $1$ as the perturbed SNP value (of the child) under $(\epsilon,T)$-dependent LDP. 

\begin{table}
\small
	\caption{Probability distribution of SNP values of a parent given the SNP values of the child based on Mendel's law. $x_i^j$ is $SNP_i$ of the child $I_j$ and $x_i^f$ is the $SNP_i$ of the parent $I_f$.}
		\label{table:mendel}
		\centering
		\begin{tabular}{c|c|c|c|c}
			\hline
		\multicolumn{2}{c|}{} & \multicolumn{3}{c}{Parent ($I_f$)}
		\\ \cline{3-5}
		\multicolumn{2}{c|}{} &  $x_i^f = 0$  & $x_i^f = 1$ & $x_i^f = 2$ 
		\\\hline
		\parbox[t]{2mm}{\multirow{3}{*}{\rotatebox[origin=c]{90}{Child ($I_j$)}}} & $x_i^j = 0$ & $2/3$ & $1/3$  & $0$  \\\cline{2-5}
 &  $x_i^j = 1$  & $1/3$  & $1/3$  & $1/3$ \\\cline{2-5}
 &  $x_i^j = 2$  & $0$ & $1/3$ & $2/3$ \\
 \hline
		\end{tabular}
	\end{table}

In any case, if the privacy budgets of the parent and the child are the same (i.e., $\epsilon_{f} = \epsilon_{j}$), the privacy of the parent is not violated (due to the sharing of the the child) since $\hat{\epsilon_{f}}$ is not greater than the privacy budget of the parent ($\epsilon_{f}$). However, the privacy budget of the parent might be exceeded if the parent has a lower privacy budget than the child. In that case, the child should consider the privacy of her parent and select a $\epsilon_{max_i}$ value (while sharing her data) to preserve the privacy of parent $I_f$. We compute $\epsilon_{max_i} = \ln{(\frac{3e^{\epsilon_{f}} - 1}{2})}$. For instance, if the parent $I_f$ has a privacy budget $\epsilon_{f} = 1$, the child $I_j$ should use a privacy budget less than or equal to $\epsilon_{max_i} = 1.27$ for sharing $SNP_i$. Note that, $\epsilon_{max_i}$ is computed for only one SNP ($SNP_i$) and it can similarly be computed for other SNPs as well. As expected, privacy of a victim degrades even more when several of victim's first degree family members share their SNPs. To show this, we extend the previous discussion and consider a case in which two children of a parent share their SNPs under $(\epsilon,T)$-dependent LDP in Appendix~\ref{app:kinship2}. 

\subsection{Sharing a SNP by Two Children}
\label{app:kinship2}

Let the second child be $I_k$ (sibling of $I_j$) and her privacy budget $\epsilon_k$ be also equal to $\epsilon_j$ for simplicity. To show the degradation of privacy, we compute the indirect privacy budget of the parent $\hat{\epsilon_{f}}$ when both children share $0$ as the value of their $SNP_i$. This can also be shown for other scenarios. When both siblings share $0$, with probability $p^2$ their original SNP values are also $0$. We compute the probability of having other pairs (as the original SNP values of the siblings) similarly. For instance, when both siblings share $0$, with probability $q^2$ their original SNP values are $2$. Using the probabilities in Table~\ref{table:mendel2}, we compute the following probabilities.
\begin{displaymath}
\Pr(x_{i}^P = 0~|~y_i^j = 0 \wedge y_i^k = 0) = \frac{4}{5}p^2 + \frac{4}{5}pq + \frac{5}{13}q^2
\end{displaymath}
\begin{displaymath}
\Pr(x_{i}^P = 1~|~y_i^j = 0 \wedge y_i^k = 0) = \frac{1}{5}p^2 + \frac{16}{5}pq + \frac{106}{65}q^2
\end{displaymath}
\begin{displaymath}
\Pr(x_{i}^P = 2~|~y_i^j = 0 \wedge y_i^k = 0) = 0p^2 + 0pq + \frac{129}{65}q^2
\end{displaymath}

Similar to the computations in Section~\ref{sec:discussion}, 
$e^{\hat{\epsilon_{f}}} = \frac{\Pr(x_{i}^f = 0~|~y_i^j = 0 \wedge y_i^k = 0)}{\Pr(x_{i}^f = 2~|~y_i^j = 0 \wedge y_i^k = 0)}$ $ = \frac{52(e^{\epsilon_j})^2 + 52e^{\epsilon_j} + 25}{129}$. Hence, $\hat{\epsilon_{f}} = \ln{(\frac{52(e^{\epsilon_j})^2 + 52e^{\epsilon_j} + 25}{129})}$. When we solve the quadratic equation $e^{\hat{\epsilon_{j}}} = \frac{52(e^{\epsilon_j})^2 + 52e^{\epsilon_j} + 25}{129}$, we obtain a root at $\epsilon_j = 0$. Hence, $\hat{\epsilon_{f}} \geq \epsilon_j$ and equality holds when $\epsilon_j = 0$. This result clearly shows that when two siblings share their SNPs under $(\epsilon,T)$-dependent LDP with the same privacy budget $\epsilon_j$, the attacker may learn more information about the parent and the privacy of the parent is violated if the privacy budget of the parent is also equal to the privacy budgets of the children.
\begin{table}
\small
	\caption{Probability distribution of SNP value of a parent given the SNP values of two children based on Mendel's law.}
		\label{table:mendel2}
		\centering
		\begin{tabular}{c|c|c|c|c}
			\hline
		\multicolumn{2}{c|}{} & \multicolumn{3}{c}{Parent ($I_f$)}
		\\ \cline{3-5}
		\multicolumn{2}{c|}{} & $x_i^f = 0$ & $x_i^f = 1$ & $x_i^f = 2$ 
		\\\hline
		\parbox[t]{2mm}{\multirow{9}{*}{\rotatebox[origin=c]{90}{Children ($I_j, I_k$)}}} & $x_i^j = 0 \wedge x_i^k = 0$ & $4/5$ & $1/5$  & $0$  \\
		\cline{2-5}
 & $x_i^j = 0 \wedge x_i^k = 1$ & $2/5$  & $3/5$  & $0$ \\
 \cline{2-5}
 & $x_i^j = 0 \wedge x_i^k = 2$ & $0$ & $1$ & $0$ \\
  \cline{2-5}
 & $x_i^j = 1 \wedge x_i^k = 0$ & $2/5$ & $3/5$ & $0$ \\
  \cline{2-5}
 & $x_i^j = 1 \wedge x_i^k = 1$ & $5/13$ & $3/13$ & $5/13$ \\
  \cline{2-5}
 & $x_i^j = 1 \wedge x_i^k = 2$ & $0$ & $3/5$ & $2/5$ \\
  \cline{2-5}
 & $x_i^j = 2 \wedge x_i^k = 0$ & $0$ & $1$ & $0$ \\
  \cline{2-5}
 & $x_i^j = 2 \wedge x_i^k = 1$ & $0$ & $3/5$ & $2/5$ \\
  \cline{2-5}
 & $x_i^j = 2 \wedge x_i^k = 2$ & $0$ & $1/5$ & $4/5$ \\
 \hline
		\end{tabular}
	\end{table}
	
Let all $I_j$, $I_k$, and $I_f$ have equal privacy budgets ($\epsilon = \epsilon_j = \epsilon_k = \epsilon_f $), which they do not want to exceed. Assume the first child shared her SNP with $\epsilon$ and the second child wants to keep the privacy budget of her parent equal to $\epsilon$ while sharing her data. In that case, the second child should use an adjusted privacy budget ($\epsilon_{max_i}$) which is less than $\epsilon$. When we compute $\epsilon_{max_i}$, we obtain $\epsilon_{max_i} = \ln{(\frac{103e^{\epsilon} - 25}{52e^{\epsilon} + 26})}$. Hence, the second child needs to share her SNP $i$ with such $\epsilon_{max_i}$ to keep her parent's privacy budget as $\epsilon$. For instance, if the privacy budget $\epsilon = 0.5$ and the first child shared her SNP with the same $\epsilon$, the second child should select $\epsilon_{max_i}$ as 0.259 to keep her parent's privacy budget as $\epsilon = 0.5$.

\end{document}